\def\alloc@#1#2#3#4#5%
    \def\etex@dummy@definition{}
\alloc@@\expandafter{\string#2}#5%
  \else\errmessage{No room for a new #2}\fi\fi
\appto{\appendix}{%
  \@ifstar{\def\theequation@prefix{A.}}%
          {}%
}
\newtheorem{prop}{Proposition}[section]
\crefname{prop}{Proposition}{Propositions}
\crefname{definition}{Definition}{Definitions}
\crefname{assumption}{Assumption}{Assumptions}
\crefname{remark}{Remark}{Remarks}
\newtheorem{definition}{Definition}[section]
\newtheorem{assumption}{Assumption}[section]
\theoremstyle{remark}
\newtheorem{remark}{Remark}[section]
\newlength \figwidth
\begin{document}

\title{Controlling statistical moments of stochastic dynamical networks}

\author{Dmytro Bielievtsov}
\thanks{These authors contributed equally to this work.\\ 
E-mail: belevtsoff@gmail.com (D. Bielievtsov)\\
E-mail: josef.ladenbauer@tu-berlin.de (J. Ladenbauer)}
\affiliation{Bernstein Center for Computational Neuroscience Berlin, Philippstra\ss{}e 13, 10115 Berlin, Germany}

\author{Josef Ladenbauer}
\thanks{These authors contributed equally to this work.\\ 
E-mail: belevtsoff@gmail.com (D. Bielievtsov)\\
E-mail: josef.ladenbauer@tu-berlin.de (J. Ladenbauer)}
\affiliation{Institut f\"ur Softwaretechnik und Theoretische Informatik, Technische Universit\"at Berlin, Marchstra\ss{}e 23, 10587 Berlin, Germany}
\affiliation{Bernstein Center for Computational Neuroscience Berlin, Philippstra\ss{}e 13, 10115 Berlin, Germany}

\author{Klaus Obermayer}
\affiliation{Institut f\"ur Softwaretechnik und Theoretische Informatik, Technische Universit\"at Berlin, Marchstra\ss{}e 23, 10587 Berlin, Germany}
\affiliation{Bernstein Center for Computational Neuroscience Berlin, Philippstra\ss{}e 13, 10115 Berlin, Germany}

\date{\today}

\begin{abstract}  
    We consider a general class of stochastic networks and ask which network
    nodes need to be controlled, and how, to stabilize and switch between
    desired metastable (target) states in terms of the first and second
    statistical moments of the system. We first show that it is sufficient to
    directly interfere with a subset of nodes which can be identified using
    information about the graph of the network only. Then, we develop a
    suitable method for feedback control which acts on that subset of nodes and
    preserves the covariance structure of the desired target state. Finally, we
    demonstrate our theoretical results using a stochastic Hopfield network and
    a global brain model. Our results are applicable to a variety of (model)
    networks, and further our understanding of the relationship between network
    structure and collective dynamics for the benefit of effective control.
\end{abstract}

\pacs{}
\maketitle

\section{\label{Sec:Intro}Introduction}

Complex dynamical network models are widely applied in science and engineering
to describe and examine various natural and human-produced phenomena.  Examples
include models of neuronal networks, gene regulatory networks, financial
networks, social networks, power grids and highway webs
\cite{Newman2003,Boccaletti2006}. 
Noise is often included to reflect the variability observed in the real system; therefore, stochastic dynamical networks constitute a particularly
relevant model class. 

Of great interest are the relationships between the structural properties of
the network (i.e., properties of its graph) and the network dynamics -- in
particular, stochastic synchronization phenomena and their controllability
\cite{Glass2001,Neiman1999,Zaks2005,Zhou2005,Porfiri2008,Wang2010}.  In human
neuroscience, for example, the zero-lag correlations between measured
fluctuating (neuronal) activities of different brain areas reveal functional
connectivity patterns and can thereby serve as a biomarker for a number of
brain diseases \cite{Fox2005,Damoiseaux2006,Rosazza2011} These connectivity
patterns are strongly influenced, but not fully determined, by the brain
network graph \cite{Haimovici2013,Deco2013JN,Deco2013TINS}. 

In this contribution, we ask whether and how the first and second statistical
moments of stochastic dynamical networks can be controlled in general by
directly interfering only with a subset of nodes. These nodes should be
determined using solely information about the network graph.  Specifically, we
consider a general class of stochastic dynamical network models which possess
a set of collective equilibrium states in terms of steady-state mean values of
node variables and covariance patterns.  The aim is to switch between these
network states  by pinning only a subset of nodes in an appropriate way.  That
is, we consider pinning control (also known as clamping control) and examine
the capabilities of the pinned nodes to hold the means for the free nodes and
the covariances between them at the desired equilibrium values.

Our approach consists of three steps (see \cref{fig:control:full}): First, we
describe the dynamics of the first and second moments of the stochastic network
system by a closed system of coupled ordinary differential equations using a
method of moments (\cref{Sec:Models}).  Based on the graph of that
deterministic system we then identify subsets of moments, which, when
controlled, steer the dynamics of the whole moments system to a desired
metastable state (\cref{Sec:ControlNodes}).  Since the graph of the system that
describes the dynamics of the moments differs from the graph of the underlying
network system, we next determine the subset of nodes that need to be
controlled in the original network (same section).  In the last step we design
a feedback controller that acts on the determined subset of nodes of the
stochastic network system to switch between correlation patterns
(\cref{Sec:Controller}).  We demonstrate the applicability of our theoretical
results using two example models: i) a noisy Hopfield network -- a well-studied
class of recurrent neural network models (\cref{Sec:Results:sim}) and ii) a
calibrated global brain network (\cref{Sec:Results:sim2}). Limitations of the
developed method and possible adjustments are discussed in
\cref{Sec:Discussion}. 

\section{\label{Sec:Models}Stochastic network model and the moments system}

Consider the network model consisting of $N$ nodes described by
\begin{equation}
    \dot{\mathbf{x}}_i = \mathbf{f}_i(t, \mathbf{x}_i, \mathbf{x}_{I_i}) + \mathbf{M}_i \boldsymbol{\eta}_i(t),
    \label{eq:stochastic:components}
\end{equation}
where $\mathbf{x}_i(t) \in \mathbb{R}^{m_i}$ is the (time-varying)
$m_i$-dimensional state vector of node $i \in \{1,\dots,N\}$, the vector field
$\mathbf{f}_i : \mathbb{R}\times\mathbb{R}^{m_i}\times\mathbb{R}^{\sum_{j \ne
i}m_j}\to \mathbb{R}^{m_i}$ is continuous, uniformly bounded and sufficiently
differentiable, and $I_i$ is the set of input node indices for node $i$.
$\boldsymbol{\eta}_i(t)$ is a $m_i$-dimensional Gaussian white noise process
with $\langle\boldsymbol{\eta}_i(t)\boldsymbol{\eta}_i(t+\tau)^T\rangle =
\delta(\tau)\mathbf{I}_{i}$, where $\mathbf{I}_{i}$ is the $m_i\times m_i$
identity matrix, and  $\mathbf{M}_i$ is a constant noise mixing matrix.
Equation~\eqref{eq:stochastic:components} can be expressed in compact form as
\begin{equation}
    \dot{\mathbf{x}} = \mathbf{f}(t, \mathbf{x}) + \mathbf{M}\boldsymbol{\eta}(t),
    \label{eq:pt1:stochastic}
\end{equation}
where $\mathbf{x} \in \mathbb{R}^k, k:=\sum_i m_i$, $\mathbf{f}(t, \mathbf{x})
= [\mathbf{f}_1(t, \mathbf{x}_1, \mathbf{x}_{I_1})^T,\dots,\mathbf{f}_N(t,
\mathbf{x}_N, \mathbf{x}_{I_N})^T]^T$, $\mathbf{M}$ is a block matrix with
blocks $\mathbf{M}_i$ on the main diagonal, and $\boldsymbol{\eta}(t)$ is a
$k$-dimensional Gaussian white noise process.  $\mathbf{f}(t, \mathbf{x})$
generates a directed \emph{graph} $\Gamma$ with $N$ nodes, where the edges that
connect node $i$ with its input nodes are indicated by the set $I_i$.

\begin{figure}[!htbp]
    \includegraphics[width=\columnwidth]{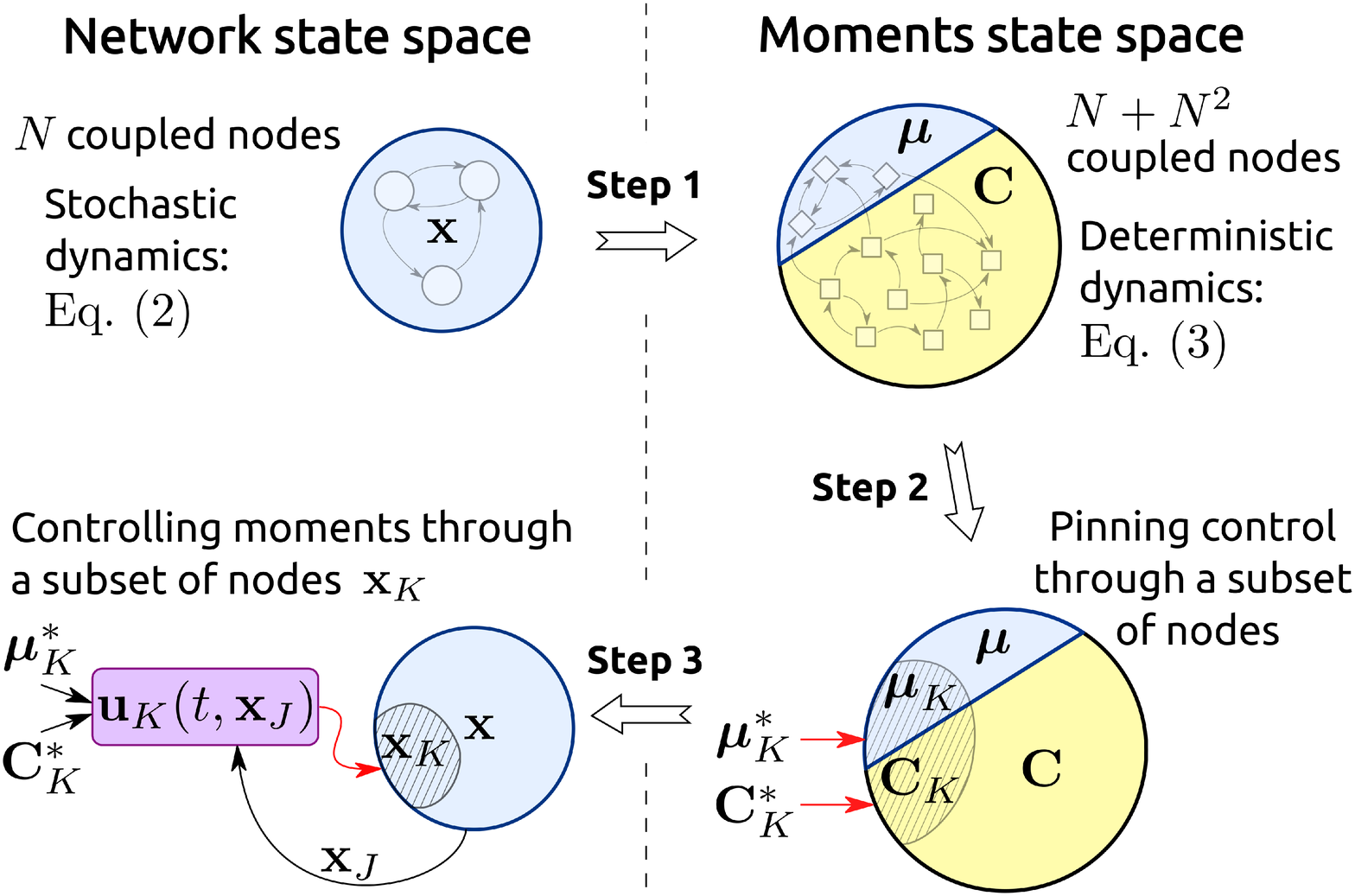}
        \caption{Schematic of the control scheme. Left: State
        space of the stochastic dynamical network whose state is represented by
        the vector $\mathbf{x}$.  Right: State space of the corresponding
        moments $\boldsymbol{\mu}$ (mean vector) and $\mathbf{C}$ (covariance
        matrix) of $\mathbf{x}$, whose dynamics is governed by a  deterministic
        system of equations.  Red arrows indicate pinning control through the
        subsets of nodes $\boldsymbol{\mu}_{K}$, $\mathbf{C}_{K}$ and
        $\mathbf{x}_K$, respectively.  $\boldsymbol{\mu}_K^*$ and
        $\mathbf{C}_K^*$ are the values of dynamic variables
        $\boldsymbol{\mu}_K$ and $\mathbf{C}_K$ on the (metastable) target state.
        $\mathbf{u}_K(t,\mathbf{x}_J(t))$ denotes the feedback control signal
        which directly affects the subset of nodes $\mathbf{x}_K$ and depends
        on the complementary subset $\mathbf{x}_J$.}
    \label{fig:control:full}
\end{figure}

The dynamics of the mean vector $\boldsymbol{\mu} (t) := \langle \mathbf{x}
\rangle (t)$ and the covariance matrix $\mathbf{C}(t) :=
\mathrm{cov}(\mathbf{x},\mathbf{x})(t) =
\langle(\mathbf{x}-\boldsymbol{\mu})(\mathbf{x}-\boldsymbol{\mu})^T \rangle
(t)$ for the system \cref{eq:pt1:stochastic} can be described by
(deterministic) ordinary differential equations using a method of moments
assuming weak noise \cite{Rodriguez1996}.  Employing the Fokker-Planck equation
for the probability density function $p(\mathbf{x}, t)$ of $\mathbf{x}$ at time
$t$, integrating that equation over the state space and using second order
Taylor expansion of $\mathbf{f}(t, \mathbf{x})$ with respect to $\mathbf{x}$
under the integral we obtain the moments system (MS),
\begin{subequations}
    \begin{align}
        \dot{\boldsymbol{\mu}} &= \mathbf{f}(t, \boldsymbol{\mu})+\frac{1}{2}\nabla\nabla \mathbf{f}(t, \boldsymbol{\mu})\cdot \mathbf{C} \label{eq:pt1:mu} \\
        \dot{\mathbf{C}} &= \nabla \mathbf{f}(t, \boldsymbol{\mu})\mathbf{C} + \mathbf{C}\nabla \mathbf{f}(t, \boldsymbol{\mu})^T 
        + \mathbf{Q} \label{eq:pt1:c},
    \end{align}
    \label{eq:pt1:muc}%
\end{subequations}
where $\nabla \mathbf{f}(t, \mathbf{x})$ is the Jacobian matrix of $\mathbf{f}$
with respect to $\mathbf{x}$ and $\mathbf{Q}:=\mathbf{M}\mathbf{M}^T$ is the
total noise covariance matrix. Note, that
\begin{equation*}
    {\left[\nabla\nabla \mathbf{f}(t, \boldsymbol{\mu}) \cdot \mathbf{C}\right]_{j} = \sum_{l, p}\frac{\partial^2}{\partial x_l\partial x_p}
    \left[\mathbf{f}(t, \mathbf{x})\right]_j\big|_{\mathbf{x}=\boldsymbol{\mu}}\mathbf{C}_{lp}}
\end{equation*}
where $j = 1,\dots, k$. The MS has a larger, $(k+k^2)$-dimensional state
space. It is convenient to consider this system as a network of $N+N^2$
coupled nodes. The $N$ components $\boldsymbol{\mu}_i(t)$ of
$\boldsymbol{\mu}(t)$ describe the mean values (over noise realizations) of the
stochastic network nodes, the $N$ diagonal $m_i\times m_i$ blocks of
$\mathbf{C}(t)$ -- variances, and the $N^2-N$ off-diagonal $m_i\times m_j$
blocks -- covariances between these nodes.  Hence, the MS generates a
connectivity graph $\Gamma^*$, which is related to the graph $\Gamma$ via
$\mathbf{f}$.

\section{\label{Sec:ControlNodes}Identifying network nodes for control}

For a deterministic network model, as given by \cref{eq:pt1:stochastic} without the noise term, i.e., 
\begin{equation} \label{eq:noiseless}
\dot{\mathbf{x}}=\mathbf{f}(t, \mathbf{x}),
\end{equation}
it is possible that only a subset of nodes needs to be pinned (i.e., prescribed
to a target trajectory) in order to drive the whole system to a target solution
(an attractor) \cite{Fiedler2013}. These nodes are termed \emph{switching
nodes} (SN) and formally defined as follows: 
\begin{definition}
    A subset $K$ of the vertex set $I:=\{1,\dots,N\}$ for the system given by
    \cref{eq:noiseless} is called a set of \emph{switching nodes}, if
    \begin{equation*}
        \lim_{t\to+\infty}\mathbf{x}_{I\setminus K}(t) = \mathbf{x}_{I\setminus K}^*(t)
    \end{equation*}
    holds for any solution $\mathbf{x}_{I\setminus K}(t)$ of the subsystem
    \begin{equation*}
        \dot{\mathbf{x}}_{I\setminus K} = \mathbf{f}_{I\setminus K}(t, \mathbf{x}_{I\setminus K}, \mathbf{x}^*_K)
    \end{equation*}
    and for any solution $\mathbf{x}^*(t)$ of \cref{eq:noiseless} (the full
    system).
    \label{def:sn}
\end{definition}
\noindent Here, $\mathbf{x}_{I\setminus K}(t)$ denotes the joint state vector
of all nodes in $I$ except the switching nodes $K$.  In other words, if the SN
are forced to attain the values from one of the solutions $\mathbf{x}^*(t)$ of
the network system, the whole network is guaranteed to converge to that
solution as $t\to\infty$ for all initial conditions.  It should be noted that
SN are a special case of the so-called \emph{determining nodes} proposed in
\cite{Fiedler2013}.  The difference is that the latter must (at least) tend to
$\mathbf{x}_K^*(t)$ as $t\to\infty$, whereas the former are forced to follow
$\mathbf{x}_K^*(t)$ exactly.  We use SN in this work because they are powerful
enough for the control purposes and simplify the mathematical derivations
below. 

Importantly, the set of SN can be identified based solely on the structural
information of the network [i.e., knowing $\Gamma$ and not necessarily
$\mathbf{f}(t, \mathbf{x})$].  The following assumptions are required:
\begin{assumption}
    \label{asm:dissip}
    Dissipativity: The system without noise, \cref{eq:noiseless} is
    dissipative, i.e., as time tends to infinity, the state of the system stays
    within a ball of finite radius, for all initial conditions.
\end{assumption}
\begin{assumption}
    \label{asm:decay}
    Decay condition:
    $\nabla_{\mathbf{x}_i}\mathbf{f}_i(t, \mathbf{x}_i, \mathbf{x}_{I_i}) < 0$,
    $\forall i \in \lbrace 1,\dots, N \rbrace$ and for all $t$ and bounded
    $\mathbf{x}$.  In other words, the \emph{Jacobian matrix} of $\mathbf{f}_i$
    with respect to $\mathbf{x}_i$ is strictly negative definite. 
\end{assumption}
If \cref{asm:decay} does not hold for a node $\mathbf{x}_j$, an explicit edge
from the node to itself is implied in the graph $\Gamma$ (see Section 1 in
\cite{Fiedler2013}).  With these two assumptions, the set of SN is identical to
the \emph{feedback vertex set} (FVS) of $\Gamma$, defined as:
\begin{definition}
    A \emph{feedback vertex set} of a directed graph $\Gamma$ is a possibly
    empty subset $K$ of vertices such that the subgraph $\Gamma\setminus K$ is
    acyclic.
    \label{def:fvs}
\end{definition}
\noindent Here, $\Gamma\setminus K$ denotes the subgraph that remains when all
vertices of $K$ are removed from $\Gamma$ along with all edges from or towards
those vertices.  This implies that the structural information alone can provide
clues about the dynamics and controllability of the system. Note, however, that
the problem of finding minimal FVS of a directed graph is NP-complete
\cite{Karp1972}. 

Let us now consider the graph $\Gamma^*$ of the MS, \cref{eq:pt1:mu,eq:pt1:c},
which can be revealed by rewriting that system using block-wise notation,
\begin{subequations}
    \begin{align}
        \dot{\boldsymbol{\mu}}_i &= \mathbf{f}_i+\frac{1}{2}\sum_{k, j\in\{i\}\cup I_i}\nabla_j\nabla_k \mathbf{f}_i \cdot \mathbf{C}_{kj} \label{eq:pt1:muel} \\
        \dot{\mathbf{C}}_{ij} &= \nabla_i \mathbf{f}_i\mathbf{C}_{ij} + \mathbf{C}_{ij}^T\nabla_i \mathbf{f}_j^T + \mathbf{Q}_{ij} \nonumber \\
        & \quad + \sum_{k\in I_i}\nabla_k \mathbf{f}_i \mathbf{C}_{kj} + \sum_{k \in I_j}\mathbf{C}_{ik}^T\nabla_k \mathbf{f}_j^T, \label{eq:pt1:cel}
    \end{align}
    \label{eq:pt1:mucel}%
\end{subequations}
where $\mathbf{f}_i$ is evaluated at $\mathbf{x}=\boldsymbol{\mu}$, i.e.,
$\mathbf{f}_i = \mathbf{f}_i(t,\boldsymbol{\mu}_i,\boldsymbol{\mu}_{I_i})$ and
$\nabla_k \mathbf{f}_i$ is the Jacobian of $\mathbf{f}_i$ with respect to the state
vector $\mathbf{x}_k$, evaluated at $\mathbf{x}=\boldsymbol{\mu}$. For improved
readability the dependency of $\mathbf{f}_i$ and its derivatives on
$\boldsymbol{\mu}_i$ and $\boldsymbol{\mu}_{I_i}$ is implied and not indicated
separately.  Notably, the Jacobian of the right-hand side of \cref{eq:pt1:muel}, i.e.,
the derivative with respect to $\boldsymbol{\mu}_i$, is not guaranteed to be negative
definite -- even if the graph $\Gamma$ of the original system does not contain
cycles (i.e., the fact that the decay condition holds for $\mathbf{f}_i$ and $I_i$ does not
include $i$ itself, for all $i \in I$).  This implies that every node
$\boldsymbol{\mu}_i$ of $\Gamma^*$ must have a self-connecting edge.
Thus, the minimal FVS (\cref{def:fvs}) of $\Gamma^*$ includes at least all
nodes $\boldsymbol{\mu}_i$, cf. \cref{eq:pt1:muel}.  To provide a simple
example, in \cref{fig:graph:torus} the subgraph structure of the moments system
of a network of 4 nodes with circular topology (directed ring) 
is visualized. Note, that in this example the minimal FVS of
the original graph $\Gamma$ consists of only one node. The minimal FVS of
$\Gamma^*$ -- according to the two subgraphs -- contains at least 11 nodes:
$\boldsymbol{\mu}_1,\dots,\boldsymbol{\mu}_4$ plus one arbitrary row and one arbitrary column of
$\mathbf{C}$.

\begin{figure}[!htbp]
    \includegraphics[width=0.95\columnwidth]{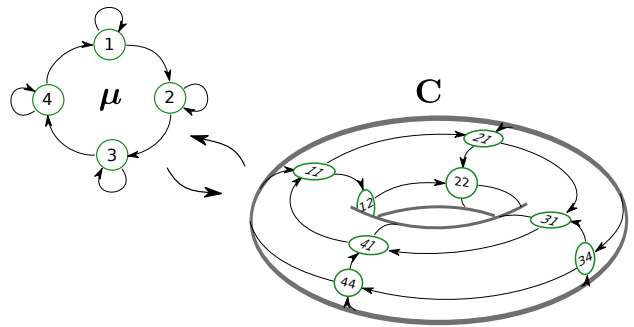}
    \caption{Subgraphs of connectivity graph $\Gamma^*$ of the MS (cf.
        \cref{eq:pt1:muel,eq:pt1:cel}) of a network (cf.
        \cref{eq:stochastic:components}) of $ N=4 $ nodes with circular graph
        $\Gamma$. Left: Subgraph for the connectivity of the expected value
        nodes $\boldsymbol{\mu}_i$.  Right: Subgraph for the connectivity
        between the (co)variance nodes $\mathbf{C}_{ij}$, $ i,j \in \lbrace
        1,\dots, N \rbrace $.  Numbers indicate the values of the indices of
        $\boldsymbol{\mu}$ and $\mathbf{C}$, respectively.
    }
    \label{fig:graph:torus}
\end{figure}

However, it is possible to find a set of SN that is smaller than the minimal
FVS of $\Gamma^*$. Using the dynamical properties of \cref{eq:pt1:mu,eq:pt1:c} together
with the structure of (any given) $\Gamma$ -- under the
\cref{asm:dissip,asm:decay}, and
\begin{assumption}
    \label{asm:noise}
    Weak noise: the probability
    $P(||\mathbf{x}(t)-\boldsymbol{\mu}(t)||<\epsilon)$ is close to $1$ for any
    norm $||\cdot ||$ and small $\epsilon >0$.
\end{assumption}
\noindent(already assumed for the method of moments) -- we obtain:

\begin{prop}
    Let $K\subseteq I$ be the minimal feedback vertex set of the graph $\Gamma$
    of the stochastic system, \cref{eq:pt1:stochastic}, with total vertex set
    $I:=\{1,\dots,N\}$, and let \cref{asm:dissip,asm:decay,asm:noise} hold.
    Then the set of switching nodes for the corresponding moments system,
    \cref{eq:pt1:mu,eq:pt1:c}, is given by 
    \begin{equation}
        K^* := \{\boldsymbol{\mu}_j, \mathbf{C}_{ij}, \mathbf{C}_{ji}: j \in K, i\in I \}.
        \label{eq:pt1:snmc}
    \end{equation}
    \label{prop:sn}
\end{prop}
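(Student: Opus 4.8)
The plan is to show directly from \cref{def:sn} that if every moment in $K^*$ is pinned to the corresponding target trajectory, then all remaining moments -- namely $\boldsymbol{\mu}_i$ for $i\in J:=I\setminus K$ and the blocks $\mathbf{C}_{ij}$ with $i,j\in J$ -- converge to their target values. Throughout I would work with the closed ODE system \cref{eq:pt1:mucel}, treat the pinned variables as known exogenous signals equal to their target values, and invoke \cref{asm:dissip} for global existence and uniform (eventual) boundedness of trajectories, so that $\mathbf{f}$ and its first three derivatives are uniformly bounded along the relevant orbits.

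First I would record the graph-theoretic backbone. Since $K$ is a feedback vertex set, $\Gamma\setminus K$ is acyclic; fix a topological order on its vertices. Inspecting \cref{eq:pt1:muel,eq:pt1:cel}, every edge of the subgraph of $\Gamma^*$ induced on the free moments either advances at least one coordinate along the DAG $\Gamma\setminus K$ (e.g. $\mathbf{C}_{ij}\to\mathbf{C}_{kj}$ for $i\to k$, or $\boldsymbol{\mu}_i\to\mathbf{C}_{ij}$) or is one of the two ``local'' edges at node $i$, the self-loop at $\boldsymbol{\mu}_i$ and the return edge $\mathbf{C}_{ii}\to\boldsymbol{\mu}_i$ produced by the second-moment correction term. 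Consequently the only directed cycles are the self-loops and the $2$-cycles $\boldsymbol{\mu}_i\leftrightarrow\mathbf{C}_{ii}$, so collapsing each pair $\{\boldsymbol{\mu}_i,\mathbf{C}_{ii}\}$ to a super-node turns the free part of $\Gamma^*$ into a DAG. I can then topologically order the resulting condensation so that $\{\boldsymbol{\mu}_i,\mathbf{C}_{ii}\}$ precedes $\{\boldsymbol{\mu}_j,\mathbf{C}_{jj}\}$ whenever $i\to j$ and precedes every block $\mathbf{C}_{ij}$ with $i\ne j$, and argue by induction along this order: when a component is processed, all of its in-neighbours in $\Gamma^*$ are either pinned (hence exactly at target) or already known to converge to target.

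For a singleton component $\{\mathbf{C}_{ij}\}$, $i\ne j$, the ordering guarantees that $\boldsymbol{\mu}_i,\boldsymbol{\mu}_j,\boldsymbol{\mu}_{I_i},\boldsymbol{\mu}_{I_j}$ and every other covariance block on the right-hand side of \cref{eq:pt1:cel} have already converged, so $\mathbf{C}_{ij}$ satisfies a linear time-varying Sylvester-type equation $\dot{\mathbf{C}}_{ij}=\nabla_i\mathbf{f}_i\,\mathbf{C}_{ij}+\mathbf{C}_{ij}\nabla_j\mathbf{f}_j^T+\mathbf{r}(t)$ whose coefficients tend to the target coefficients and whose homogeneous part is uniformly exponentially stable because $\nabla_i\mathbf{f}_i,\nabla_j\mathbf{f}_j$ are uniformly negative definite by \cref{asm:decay} and bounded by \cref{asm:dissip}. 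Subtracting the equation satisfied by $\mathbf{C}_{ij}^*$ and applying standard estimates for exponentially stable time-varying linear systems with vanishing forcing gives $\mathbf{C}_{ij}\to\mathbf{C}_{ij}^*$. The genuinely delicate case is the two-dimensional component $\{\boldsymbol{\mu}_i,\mathbf{C}_{ii}\}$, where naive feedback-vertex-set reasoning fails precisely because of the self-loop flagged before the proposition. Here I would set $\mathbf{e}_\mu=\boldsymbol{\mu}_i-\boldsymbol{\mu}_i^*$, $\mathbf{e}_C=\mathbf{C}_{ii}-\mathbf{C}_{ii}^*$, expand around the target with the mean value theorem, and collect all inductively vanishing quantities into forcing terms, obtaining $\dot{\mathbf{e}}_\mu=A(t)\mathbf{e}_\mu+\tfrac12\nabla\nabla\mathbf{f}_i\cdot\mathbf{e}_C+O(\|\mathbf{C}^*\|\,\|\mathbf{e}_\mu\|)+o(1)$ and $\dot{\mathbf{e}}_C=\nabla_i\mathbf{f}_i\,\mathbf{e}_C+\mathbf{e}_C\nabla_i\mathbf{f}_i^T+O(\|\mathbf{C}^*\|\,\|\mathbf{e}_\mu\|)+o(1)$ with $A(t)$ uniformly negative definite (from \cref{asm:decay}).

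This is where \cref{asm:noise} is used decisively: weak noise makes $\mathbf{Q}=\mathbf{M}\mathbf{M}^T$ small, hence $\mathbf{C}^*$ (a solution of the Lyapunov-type relation \cref{eq:pt1:c}) is small, so the $O(\|\mathbf{C}^*\|\,\|\mathbf{e}_\mu\|)$ terms are a small perturbation of the contracting directions rather than a destabilizing feedback -- this is exactly what neutralizes the non-negative-definite Jacobian direction of \cref{eq:pt1:muel}. With $V=\|\mathbf{e}_\mu\|^2+\beta\|\mathbf{e}_C\|_F^2$, $\beta$ large, and the noise small enough that the indefinite cross term is dominated, one obtains $\dot V\le -cV+\delta(t)$ with $c>0$ and $\delta(t)\to0$, whence $V\to0$. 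Completing the induction over all components gives convergence of every free moment, i.e. $K^*$ is a set of switching nodes for the MS; minimality of $K$ enters only to make the resulting $K^*$ the specific set in \eqref{eq:pt1:snmc}. I expect the handling of this feedback pair -- certifying that weak noise turns the indefinite direction into a harmless perturbation uniformly along all dissipative trajectories, and combining the resulting Lyapunov estimate cleanly with the inductive forcing terms -- to be the main obstacle; the DAG cascade and the Sylvester estimates are comparatively routine.
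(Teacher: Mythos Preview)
Your proof has a genuine gap in the graph-theoretic step. You claim that the only cycles among the free moments are the self-loop at $\boldsymbol{\mu}_i$ and the $2$-cycle $\boldsymbol{\mu}_i\leftrightarrow\mathbf{C}_{ii}$, but this is false: for every $j\in I_i\cap J$ there is also a $2$-cycle $\boldsymbol{\mu}_i\leftrightarrow\mathbf{C}_{ij}$ (and $\boldsymbol{\mu}_i\leftrightarrow\mathbf{C}_{ji}$). Indeed, $\dot{\boldsymbol{\mu}}_i$ contains the Hessian term $\nabla_j\nabla_i\mathbf{f}_i\cdot\mathbf{C}_{ij}$ from \cref{eq:pt1:muel} (so $\mathbf{C}_{ij}\to\boldsymbol{\mu}_i$), while $\dot{\mathbf{C}}_{ij}$ depends on $\nabla_i\mathbf{f}_i(\boldsymbol{\mu}_i,\boldsymbol{\mu}_{I_i})$ (so $\boldsymbol{\mu}_i\to\mathbf{C}_{ij}$). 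Consequently, collapsing only the pairs $\{\boldsymbol{\mu}_i,\mathbf{C}_{ii}\}$ does \emph{not} produce a DAG, and your inductive cascade breaks immediately: when you treat $\mathbf{C}_{ij}$ ($i\ne j$) as a singleton and assert that ``$\boldsymbol{\mu}_i,\boldsymbol{\mu}_j$ have already converged'', this is circular whenever $j\in I_i$, because your convergence argument for $\boldsymbol{\mu}_i$ in turn needs $\mathbf{C}_{ij}$.

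The paper's proof avoids this by abandoning the node-by-node induction. It linearizes the full clamped moments subsystem via the Newton--Leibniz formula and writes the Jacobian as $\mathbf{U}=\mathbf{U}_L+\mathbf{B}$, where $\mathbf{U}_L$ is lower block-triangular with diagonal blocks $\nabla_J\mathbf{f}_J$ and $\nabla_J\mathbf{f}_J\oplus\nabla_J\mathbf{f}_J$ (shown negative definite after a permutation via a Kronecker-sum lemma), and $\mathbf{B}$ collects \emph{all} Hessian-correction feedback at once. Since $\mathbf{B}$ is linear in $\mathbf{C}$, it is uniformly small under \cref{asm:noise}, and a perturbation theorem for block-triangular nonautonomous linear systems yields global asymptotic stability in one stroke. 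Your core insight---that weak noise turns the indefinite $\mathbf{C}\to\boldsymbol{\mu}$ feedback into a harmless perturbation---is exactly the mechanism the paper exploits; what fails is localizing it to the two-node components $\{\boldsymbol{\mu}_i,\mathbf{C}_{ii}\}$. A repair along your lines would require enlarging each super-node to include every $\mathbf{C}_{ij},\mathbf{C}_{ji}$ with $j\in\{i\}\cup I_i$, at which point the per-component Lyapunov analysis essentially reproduces the paper's global argument on a block.
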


\noindent In other words, if the FVS of the stochastic system is $K$, then the
set of SN $K^*$ of the associated MS consists of the means and variances of the
nodes in $K$ and, importantly, the covariances between these nodes and all
others (visualized in \cref{fig:moments} for $K=\lbrace2,3\rbrace$).  The set
of SN of the MS for the example ring network mentioned above consists of one
arbitrary mean vector $ \boldsymbol{\mu}_i $ plus the covariances between the
corresponding node ($ \mathbf{x}_i $) and all other nodes. The proof of
\cref{prop:sn} is given in the \cref{Sec:proofs}.

\begin{figure}[!htbp]
    \includegraphics[width=\columnwidth]{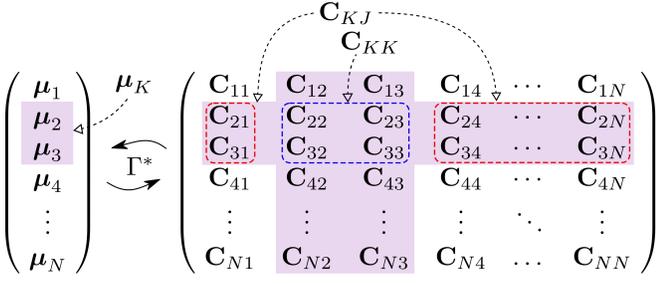}
    \caption{
        Schematic of mean vector $\boldsymbol{\mu}$ and covariance matrix
        $\mathbf{C}$, where a subset of nodes is highlighted by shaded
        background. That subset $K^*$ of nodes needs to be controlled, given
        that the FVS $K$ of the (original) network consists of nodes 2 and 3.
        In general, the set $K^*$ comprises the nodes $\boldsymbol{\mu}_i$ with
        the same index values as in the FVS of the original network, plus the
        corresponding rows and columns of the matrix $\mathbf{C}$, cf.
        \cref{prop:sn}.
        }
    \label{fig:moments}
\end{figure}

\section{\label{Sec:Controller}Designing the controller}

When controlled via pinning, the set of SN is capable to switch the entire
network system from any state to the desired metastable state.
According to \cref{prop:sn}, the set of SN $K^*$ of a MS includes the nodes
that describe the covariances between the nodes of the minimal FVS $K$ with the
rest of the nodes of the stochastic network system.  Assuming that the nodes in
$K$ are physically accessible, the question is how to choose the appropriate
$\mathbf{u}_K(t):=\mathbf{x}_K(t)$ for the stochastic system such that the set
of nodes $K^*$ of the corresponding MS are pinned (\cref{fig:control:full}).

Let us consider the system given by \cref{eq:pt1:stochastic}, with FVS $K$, as
two coupled subsystems,
\begin{subequations}
    \begin{align}
        \dot{\mathbf{x}}_K &= \mathbf{f}_K(t, \mathbf{x}_K, \mathbf{x}_J) + \mathbf{M}_K\boldsymbol{\eta}_K(t) \label{eq:pt2:xK} \\
        \dot{\mathbf{x}}_J &= \mathbf{f}_J(t, \mathbf{x}_J, \mathbf{x}_K) + \mathbf{M}_J\boldsymbol{\eta}_J(t), \label{eq:pt2:xJ}
    \end{align}
    \label{eq:pt2:split}%
\end{subequations}
where $J:= I \setminus K$, and $\boldsymbol{\eta}_{J}(t)$,
$\boldsymbol{\eta}_{K}(t)$ are independent Gaussian white noise processes.
Controlling the whole system is then reduced to controlling the subsystem,
\cref{eq:pt2:xJ}, with the control signal $\mathbf{u}_K(t)$ that is used to pin
the nodes in $ K $: $\mathbf{x}_K(t) = \mathbf{u}_K(t)$ and
\begin{equation}
    \dot{\mathbf{x}}_J = \mathbf{f}_J(t, \mathbf{x}_J, \mathbf{u}_K(t)) + \mathbf{M}_J\boldsymbol{\eta}_J(t).
    \label{eq:pt2:class1}
\end{equation}

Note that by definition of SN (\cref{def:sn}), if $\boldsymbol{\eta}_J(t)\equiv
0$, setting the control signal to the desired (deterministic) attractor,
$\mathbf{u}_K(t) = \mathbf{x}^*_K(t)$ guarantees that the subsystem
$\mathbf{x}_J$ necessarily converges to the desired state $\mathbf{x}^*_J(t)$.
With non-zero noise, however, we seek to guide the system to a metastable state in
the moments space, that is, we want to control the expected value and
covariance matrix of $\mathbf{x}(t)$ rather than the stochastic $\mathbf{x}(t)$
itself.  Therefore, the goal is to design a stochastic feedback controller
$\mathbf{u}_K(t, \mathbf{x}_J(t))$, such that the following conditions (in the
moments space) are satisfied:
\begin{subequations}
    \begin{align}
        &\langle \mathbf{u}_K \rangle(t) = \boldsymbol{\mu}_K^*(t) \label{eq:pt2:condmu} \\
        &\mathrm{cov}(\mathbf{u}_K, \mathbf{u}_K)(t) = \mathbf{C}^*_{KK}(t) \label{eq:pt2:condvar} \\
        &\mathrm{cov}(\mathbf{x}_J, \mathbf{u}_K)(t) = \mathbf{C}^*_{JK}(t) = \mathbf{C}_{KJ}^{*T}(t) \label{eq:pt2:condcov}
    \end{align}
    \label{eq:pt2:cond}%
\end{subequations}
where $\boldsymbol{\mu}^*(t)$ and $\mathbf{C}^*(t)$ denote the values of the target metastable state 
in the moments space.  We choose $\mathbf{u}_K(t, \mathbf{x}_J(t))$ as a linear
feedback controller of the form:
\begin{equation}
    \mathbf{u}_K(t, \mathbf{x}_J(t)) := [\boldsymbol{\mu}_g(t) + \mathbf{C}_g^{\frac{1}{2}}(t)\boldsymbol{\eta}_g(t)] + \mathbf{W}(t)^T\mathbf{x}_J(t)
    \label{eq:pt2:controller}
\end{equation}
where $[\boldsymbol{\mu}_g(t) +
\mathbf{C}_g^{\frac{1}{2}}(t)\boldsymbol{\eta}_g(t)]$ is an independent
time-uncorrelated Gaussian process with mean $\boldsymbol{\mu}_g(t)$ and
covariance $\mathbf{C}_g(t)$, and $\mathbf{W}(t) \in \mathbb{R}^{m_J\times
m_K}$ is a feedback weighting matrix to be determined. 
The expected value $\boldsymbol{\mu}_K(t) := \langle \mathbf{x}_K\rangle(t) =
\langle \mathbf{u}_K\rangle(t) $ is then given by
\begin{equation}
    \boldsymbol{\mu}_K(t) = \boldsymbol{\mu}_g(t) + \mathbf{W}^T(t) \boldsymbol{\mu}_J(t),
    \label{eq:pt2:muk}
\end{equation} 
the covariance matrix $\mathbf{C}_{KK}(t) := \mathrm{cov}(\mathbf{x}_K,
\mathbf{x}_K)(t) = \mathrm{cov}(\mathbf{u}_K, \mathbf{u}_K)(t)$ reads 
\begin{equation}
    \mathbf{C}_{KK}(t) = \mathbf{C}_g(t) + \mathbf{W}^T(t)\mathbf{C}_{JJ}(t)\mathbf{W}(t),
    \label{eq:pt2:var}
\end{equation}
and the cross-covariance matrix $\mathbf{C}_{JK}(t) :=
\mathrm{cov}(\mathbf{x}_J, \mathbf{x}_K)(t) = \mathrm{cov}(\mathbf{x}_J,
\mathbf{u}_K)(t)$ takes the form (time dependence of all variables is implied)
\begin{equation}
    \begin{split}
        \mathbf{C}_{JK}
        &= \langle \mathbf{x}_J\boldsymbol{\mu}_g^T\rangle + \langle \mathbf{x}_J \mathbf{x}_J^T \mathbf{W} \rangle - \boldsymbol{\mu}_J \boldsymbol{\mu}_g^T - \boldsymbol{\mu}_J \boldsymbol{\mu}_J^T \mathbf{W} \\
        &= \left(\langle \mathbf{x}_J\mathbf{x}_J^T\rangle - \boldsymbol{\mu}_J\boldsymbol{\mu}_J^T\right)\mathbf{W} = \mathbf{C}_{JJ}\mathbf{W}. 
    \end{split}
    \label{eq:pt2:cov}
\end{equation}
Applying condition (\ref{eq:pt2:condcov}) to \cref{eq:pt2:cov}, we obtain
\begin{equation}
    \mathbf{W}(t) = \mathbf{C}^{-1}_{JJ}(t)\mathbf{C}^*_{JK}(t),
    \label{eq:pt2:w}
\end{equation}
using condition (\ref{eq:pt2:condvar}) together with
\cref{eq:pt2:w,eq:pt2:var}, we obtain 
\begin{equation}
    \mathbf{C}_g(t) = \mathbf{C}_{KK}^*(t) - \mathbf{C}_{JK}^{*T}(t)\mathbf{C}_{JJ}^{-1}(t)\mathbf{C}_{JK}^*(t),
    \label{eq:pt2:cg}
\end{equation}
and, using condition (\ref{eq:pt2:condmu}), \cref{eq:pt2:w,eq:pt2:muk} yields
\begin{equation}
    \boldsymbol{\mu}_g(t) = \boldsymbol{\mu}_K^*(t) - \mathbf{C}_{JK}^{*T}(t)\mathbf{C}^{-1}_{JJ}(t)\boldsymbol{\mu}_J(t).
    \label{eq:pt2:mug}
\end{equation}
This means, applying the control signal $\mathbf{u}_K^*(t, \mathbf{x}_J(t))$
given by \cref{eq:pt2:controller} with time-varying parameters according to
\cref{eq:pt2:w,eq:pt2:cg,eq:pt2:mug} is equivalent to pinning the SN in the
moments space to the target state $(\boldsymbol{\mu}^*(t),
\mathbf{C}^*(t))$.  Note, that the control signal $\mathbf{u}_K^*$ depends on
the state vector $\mathbf{x}_J(t)$ (hence, full observability of the stochastic
system is required) as well as on the moments $\boldsymbol{\mu}_J(t)$ and
$\mathbf{C}_{JJ}(t)$

These moments can be calculated using an adapted MS, cf. \cref{eq:pt1:mu,eq:pt1:c},
that takes pinning into account: $\boldsymbol{\mu}_K(t) =
\boldsymbol{\mu}_K^*(t) $, $ \mathbf{C}_{KK}(t) = \mathbf{C}_{KK}^*(t) $, $
\mathbf{C}_{JK}(t) = \mathbf{C}_{KJ}^T(t) = \mathbf{C}_{JK}^*(t) $, and:
\begin{subequations}
	\begin{align}
	\dot{\boldsymbol{\mu}}_J &= \mathbf{f}_J + \frac{1}{2}\nabla\nabla \mathbf{f}_J \cdot \mathbf{C} \label{eq:pt2:muJ} \\
	\begin{split}
    \dot{\mathbf{C}}_{JJ} &= \nabla_J \mathbf{f}_J\mathbf{C}_{JJ} + \mathbf{C}_{JJ}\nabla_J \mathbf{f}_J^T + \mathbf{Q}_{JJ} \\
	&\quad + \nabla_K \mathbf{f}_J\mathbf{C}_{KJ} + \mathbf{C}_{KJ}\nabla_K \mathbf{f}_J^T,
	\end{split}\label{eq:pt2:cJ}
	\end{align}
	\label{eq:pt2:mucJ}%
\end{subequations}
where $ \mathbf{f}_J $ is evaluated at $ \mathbf{x} = \boldsymbol{\mu} $, i.e.,
$ \mathbf{f}_J = \mathbf{f}_J(t,\boldsymbol{\mu}_J,\boldsymbol{\mu}_K^*) $ and
$\nabla_K \mathbf{f}_J$ is the Jacobian of $\mathbf{f}_J$ with respect to the state
vector $\mathbf{x}_K$, evaluated at $\mathbf{x}=\boldsymbol{\mu}$.

Note that although the clamped MS possesses a globally stable target state (by
\cref{prop:sn}), it only approximates the true $\mathbf{C}_{JJ}(t)$. Therefore,
the proposed controller does not guarantee stability for $t\to\infty$, as the
errors can build up. In practice, however, our controller is capable of
switching between the metastable states and holding the system there for a long
time (see \cref{Sec:Results:sim,Sec:Results:sim2} for empirical evaluation).
\begin{figure}[!htbp]
    \includegraphics[width=0.8\columnwidth]{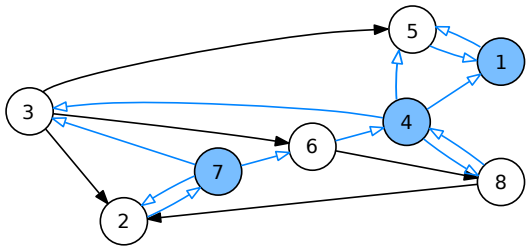}
    \caption{Connectivity graph $\Gamma$ of the example Hopfield network. 
        Vertices that belong to the selected minimal FVS $\{1, 4, 7\}$ as well
        as their incoming and outgoing edges are highlighted in blue.  There are no
        self-feedback loops because of the decay condition (\ref{asm:decay}).
        Note, that $\Gamma$ possesses a second minimal FVS of same cardinality:
        $\{1,2,4\}$.}
    \label{fig:graph}
\end{figure}

\begin{remark}
    In practice, when calculating $\mathbf{C}_g(t)$ according to
    \cref{eq:pt2:cg} it is possible that upon initiation of control the
    Cauchy-Schwarz inequality \cite{Tripathi1999}
    \begin{equation}
            \mathbf{C}_{KK}^*(t) \ge \mathbf{C}_{JK}^{*T}(t)\mathbf{C}_{JJ}^{-1}(t)\mathbf{C}_{JK}^*(t) 
            \label{eq:pt2:cauchy}
    \end{equation}
    is not satisfied, leading to negative definite $\mathbf{C}_g$.  This can
    occur because $\mathbf{C}_{KK}(t)$ and $\mathbf{C}_{JK}(t)$ are immediately
    forced to attain values on the target state, while the values of
    $\mathbf{C}_{JJ}$ still correspond to a distinct state.  In other words,
    the moments system (\cref{eq:pt2:muJ,eq:pt2:cJ}) might be dragged to a point, where
    it is not possible to construct a physical counterpart given by
    \cref{eq:pt2:class1} due to the constraints of probability laws.
    As a solution, we temporarily set $\mathbf{C}_g(t) \equiv 0$ in
    \cref{eq:pt2:controller} at the time points, where $\mathbf{C}_g(t)$ would
    otherwise be invalid.  For those time points $\mathbf{C}_{KK}(t)$ is no
    longer clamped to the target value $\mathbf{C}^*_{KK}(t)$. However, in this
    case, the MS still approximates the target state
    $(\boldsymbol{\mu}^*_J(t), \mathbf{C}^*_{JJ}(t))$, since the right-hand side of
    \cref{eq:pt2:cJ} only depends on $\mathbf{C}_{KK}$ via
    $\boldsymbol{\mu}_J(t)$, whose dynamics is dominated by $\mathbf{f}(t,
    \boldsymbol{\mu}_J, \boldsymbol{\mu}_K^*(t))$ by weak noise assumption.
    As the system approaches the target state, $\mathbf{C}_g(t)$ becomes
    well-defined again (since, by construction, the target state has
    physical meaning). In \cref{Sec:Results:sim,Sec:Results:sim2} we employ this approach using
    Monte-Carlo simulations. 
    \label{rem:cauchy}
\end{remark}

\section{\label{Sec:Results:sim}Example 1: Hopfield network}


We demonstrate the applicability of the theoretical results for general
stochastic dynamical networks from \cref{Sec:ControlNodes,Sec:Controller} using
an asymmetric Hopfield network \cite{Zheng2010}. These networks possess a
set of locally stable equilibrium states and a well-defined energy function
(see, e.g., \cite{Zheng2010, Hopfield2007, Hopfield1984, Rojas1996}). We
consider a stochastic Hopfield network described by
\begin{equation} 
    \dot{\mathbf{x}} = -\mathbf{x} + \mathbf{G}\boldsymbol{\phi}(\mathbf{x}) + \boldsymbol{\mu}_{\mathrm{ext}} + \sigma_{\mathrm{ext}}\boldsymbol{\eta}(t),
    \label{eq:sim:hopf}
\end{equation}
where $\mathbf{x}\in\mathbb{R}^N$ is the state vector, $\mathbf{G}$ the
(asymmetric) coupling matrix, $\boldsymbol{\phi}:\mathbb{R}^N\to\mathbb{R}^N$
denotes an element-wise hyperbolic tangent function, i.e.,
$\boldsymbol{\phi}(\mathbf{x}) = \left[\tanh(x_1),\dots,\tanh(x_N)\right]^T$,
$\boldsymbol{\mu}_{\mathrm{ext}}$ is the deterministic part of the external
input, $\boldsymbol{\eta}(t)$ is a $N$-dimensional Gaussian white noise
process, and $\sigma_{\mathrm{ext}}$ its (element-wise) standard deviation.
Here, $ N = 8 $ and $\mathbf{G}$ is given by the matrix
\begin{equation*}
    10^{-3} \cdot
    \begin{bmatrix}
        646 & 0 & 0 & 0 & 559 & 0 & 0 & 0 \\
        0 & 462 & 0 & 0 & 0 & 0 & -554 & 0 \\
        0 & -357 & 665 & 0 & 290 & -373 & 0 & 0 \\
        273 & 0 & 277 & 0 & 273 & 0 & 0 & 275 \\
        559 & 0 & 0 & 0 & 646 & 0 & 0 & 0 \\
        0 & 0 & 0 & -558 & 0 & 0 & 0 & -217 \\
        0 & -386 & 351 & 0 & 0 & 516 & 467 & 0 \\
        0 & 299 & 0 & 643 & 0 & 0 & 0 & 405
    \end{bmatrix}
    \label{eq:sim:G}
\end{equation*}
obtained using the algorithm described in \cite{Zheng2010}, where the weakest
edges were pruned to reduce the number of connections. 
The network graph (given
by $ \mathbf{G} $) with highlighted minimal FVS is shown in \cref{fig:graph}.
Note that the decay condition (\cref{asm:decay}) is fulfilled for all $ N $
nodes since the diagonal elements of $\mathbf{G}$ are all $ < 1 $. 
We further chose 
\begin{equation*}
 	\boldsymbol{\mu}_{\mathrm{ext}} = 10^{-2} \cdot
    \begin{bmatrix}
        -32& -7& 17& -36& 32& 16& -24& 3 \\
    \end{bmatrix}^T
    \label{eq:sim:Iinp}
\end{equation*}
and $\sigma_{\mathrm{ext}} = 0.01$ (larger values for $ \sigma_{\mathrm{ext}} $
were also considered, see below).
\begin{figure}[!htbp]
    \includegraphics[width=\columnwidth]{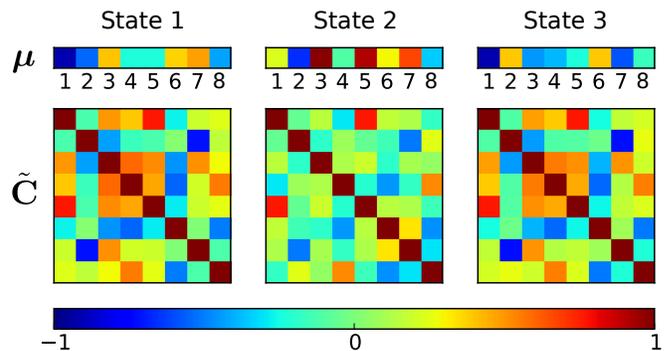}
    \caption{Metastable states of the example Hopfield network, given by
        \cref{eq:sim:hopf} ($\sigma_\mathrm{ext}=0.01$). Values of mean vector
        $\boldsymbol{\mu}$ and correlation matrix $\tilde{\mathbf{C}} =
        (diag(\mathbf{C}))^{-\frac{1}{2}}\mathbf{C}(diag(\mathbf{C}))^{-\frac{1}{2}}$,
        with covariance matrix $\mathbf{C}$, for the three states.  Note, that
    the three mean vectors are mutually distinct (approximately equidistant),
the correlation matrices of state 1 and 3, however, are similar, but not the
same.}
    \label{fig:fpoints}
\end{figure}

\begin{figure*}[!htbp]
    \includegraphics[width=0.9\linewidth]{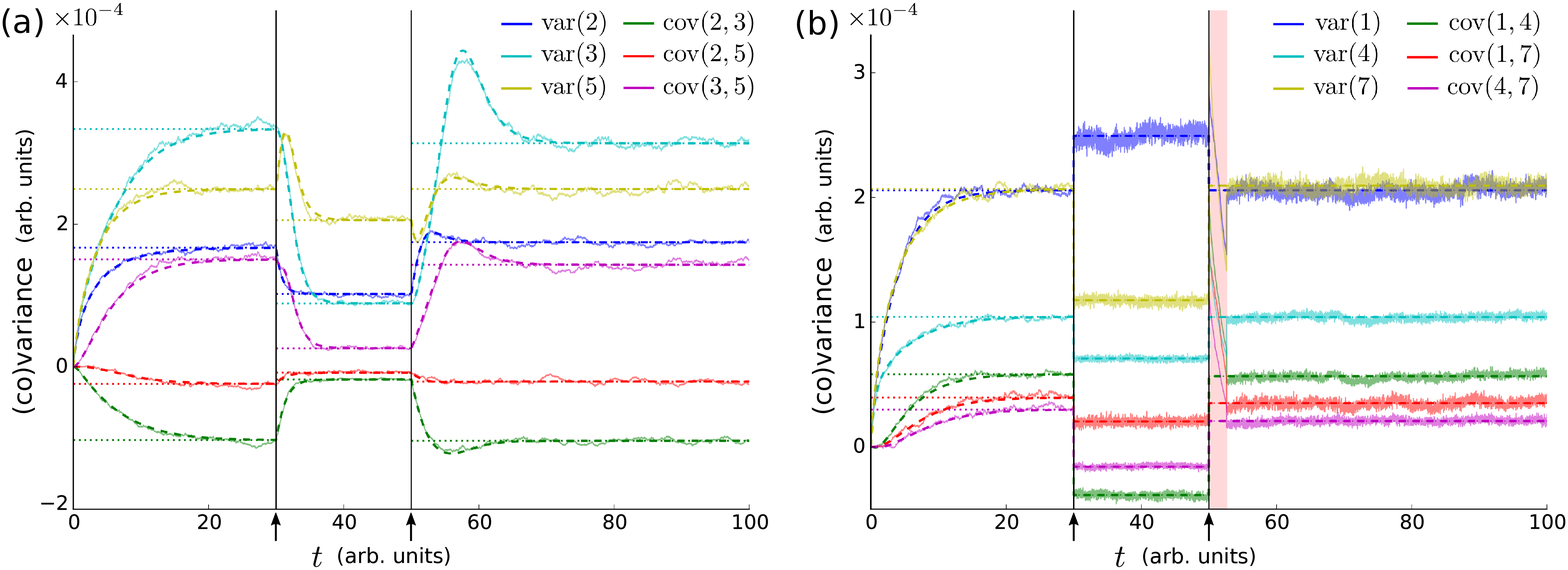}
    \caption{Switching states ($1\to2\to3$) of the Hopfield network ($\sigma_\mathrm{ext}=0.01$) by
        controlling the nodes of the FVS.  Time tracks of (co)variances of (a)
        nodes $\{2,3,5\}$ and (b) the pinned nodes of the FVS $\lbrace1,
        4, 7\rbrace$, obtained from Monte Carlo simulations (solid
        lines) and by solving the MS, \cref{eq:pt2:muJ,eq:pt2:cJ} (dashed lines).
        Dotted lines indicate the (co)variances values for states 1
        ($t\le30$), 2 ($30<t\le50$), and 3 ($t>50$). Arrows at $t=30$ and
        $t=50$ mark the initiation of control.  The shaded region starting at
        $t=50$ (right) indicates the period of time during which the noise
        intensity of the control signal is set to zero temporarily
        ($\mathbf{C}_g(t) = 0$) to prevent violation of the Cauchy-Schwarz
        inequality (see \cref{rem:cauchy}).  Note, that the Monte Carlo
        estimates of the pinned (co)variance values (i.e., those of
        $\mathbf{u}_K$) 
        exhibit increased high (temporal) 
        frequency because of the additive white Gaussian
        noise process of the control signal 
        (cf. \cref{eq:pt2:controller}).}
    \label{fig:control:fvs}
\end{figure*}
\begin{figure*}[!htb]
    \includegraphics[width=0.9\linewidth]{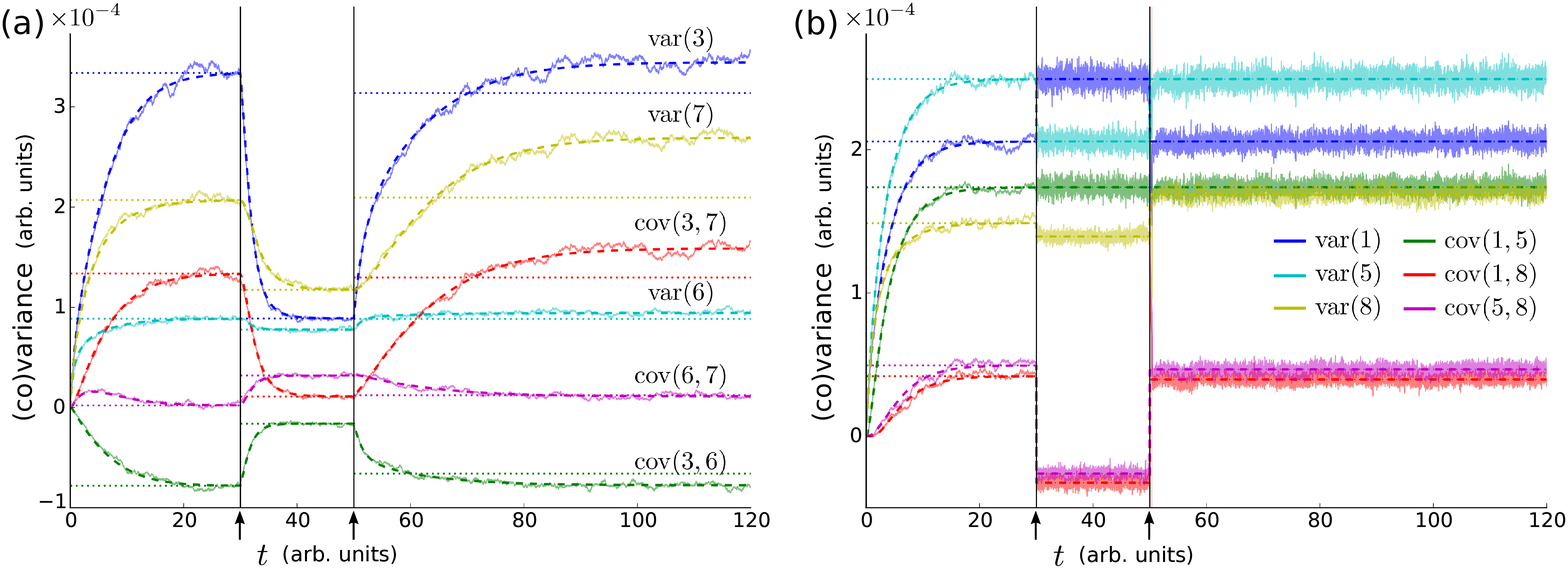}
    \caption{
        Switching states ($1\to2\to3$) of the Hopfield network ($\sigma_\mathrm{ext}=0.01$) by
        controlling the nodes $\lbrace1, 5, 8\rbrace$ (not a FVS).  Time tracks
        of (co)variances of (a) nodes $\lbrace3,6,7\rbrace$ and (b) the
        pinned nodes  $\{1,5,8\}$, obtained from Monte Carlo
        simulations (solid lines) and by solving the MS, \cref{eq:pt2:muJ,eq:pt2:cJ}
        (dashed lines).  Dotted lines indicate the (co)variances values for
        states 1 ($t\le30$), 2 ($30<t\le50$), and 3 ($t>50$). Arrows mark
        the initiation of control.  Note that switching $1\to2$ succeeds, but
        the controller fails to steer the system to state 3. 
    }
    \label{fig:control:wrong}
\end{figure*}

With the parameter values specified above the network without noise, i.e.,
$\sigma_{\mathrm{ext}} = 0$ in \cref{eq:sim:hopf}, possesses three stable fixed
point attractors with roughly equal Euclidean distance between each two. 
This gives rise to three metastable states when noise is present ($\sigma_{\mathrm{ext}}>0$).
That is, any solution trajectory of \cref{eq:sim:hopf} 
will fluctuate around the corresponding fixed point of the noiseless system 
for a certain period of time but eventually will escape 
its attraction domain and approach another fixed point, thereby transitioning between
metastable states. 
In other words, the solution of the Fokker-Planck equation that corresponds
to the Langevin equation (\ref{eq:sim:hopf}) will show a single peak close to one of the 
fixed points for a longer period of time -- if the probability density is initialized 
around that fixed point -- before the solution converges to the steady-state distribution
that reveals additional peaks (equivalent to the Boltzmann distribution).
The values of the mean vector $ \boldsymbol{\mu}(t) $ and
the covariance matrix $ \mathbf{C}(t) $ 
that correspond to these three metastable states are visualized in \cref{fig:fpoints}. 
%

All simulations were performed using Python with the libraries ``SciPy'' and
``Theano'' \cite{theano}.  The system of stochastic differential equations,
\cref{eq:sim:hopf}, was solved using a stochastic 4th order Runge-Kutta method
\cite{Kasdin1995} (integration step 0.01) and the corresponding MS given by a
set of ordinary differential equations was solved using the LSODA integration
method from ``SciPy'' (same integration step).

We used the controller described in \cref{Sec:Controller} to switch the
state of the system (in terms of mean and covariance values) from state 1 to
state 2 (at $t = 30$), and 
subsequently to state 3 (at $t=50$), by
pinning only the nodes of the FVS $K=\{1, 4, 7\}$. The system was initialized
near the state 1 and evolved freely until the control method was applied. 
Immediately after that moment the states of the pinned nodes changed abruptly according to the
control protocol. Specifically, given the target state $(\boldsymbol{\mu}^*(t),
\mathbf{C}^*(t))$ we first solved the adapted MS, \cref{eq:pt2:muJ,eq:pt2:cJ}.  Using
that solution we next calculated the parameters $\boldsymbol{\mu}_g(t)$,
$\mathbf{C}_g(t)$ and $\mathbf{W}(t)$ for the feedback control signal
$\mathbf{u}_K^*(t, \mathbf{x}_J(t))$ according to
\cref{eq:pt2:controller,eq:pt2:w,eq:pt2:cg,eq:pt2:mug}.  Finally, we simulated
the stochastic system 5000 times with the feedback controller
$\mathbf{u}_K^*(t,\mathbf{x}_J(t))$ for different noise realizations to
estimate the moments. The results of this procedure are shown in
\cref{fig:control:fvs,fig:control:wrong}. The values of $ \mathbf{C}(t) $
(subsets of $ \mathbf{C}_{JJ} (t)$ and $ \mathbf{C}_{KK} (t)$ shown in
\cref{fig:control:fvs,fig:control:wrong}) and the values of $
\boldsymbol{\mu}(t) $ (not shown) converge to the target values after a
transient transition period.
Note that the covariances between the clamped nodes ($ \mathbf{C}_{KK}(t) $)
may not follow the prescribed values immediately after pinning for that
transient period, cf. \cref{fig:control:fvs} (b), as explained in
\cref{rem:cauchy}. However, during the same period, the covariances between all
other nodes ($ \mathbf{C}_{JJ}(t) $) behave as predicted, cf.
\cref{fig:control:fvs} (a). 

\cref{fig:control:wrong} shows that choosing a subset of nodes that is not a
FVS (nodes $\lbrace 1, 5, 8 \rbrace$) does not guarantee controllability of
all metastable states. In particular, such controller fails to bring the system to
state 3.
\begin{figure}[!htbp]
    \includegraphics[width=0.9\columnwidth]{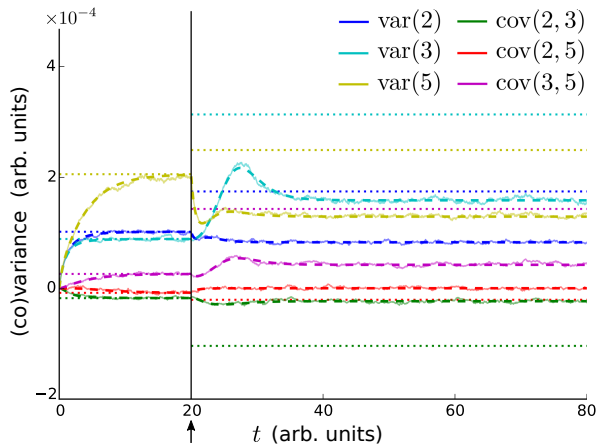}
    \caption{
        Switching states ($2\to3$) of the Hopfield network ($\sigma_\mathrm{ext}=0.01$) by controlling
        the nodes of the FVS $K=\{1, 4, 7\}$ using an open-loop controller,
        \cref{eq:sim:openloop}. Time tracks of (co)variances of nodes
        $\{2,3,5\}$, obtained from Monte Carlo simulations (solid lines) and by
        solving the MS, \cref{eq:pt2:muJ,eq:pt2:cJ} (with
        $\boldsymbol{\mu}_K(t)\equiv\boldsymbol{\mu}_K^*,
        \mathbf{C}_{KK}(t)\equiv \mathbf{C}_{KK}^*, \mathbf{C}_{JK}(t)\equiv
        0$, dashed lines).  Dotted lines indicate the (co)variances values for
        states 2 ($t\le20$) and 3 ($t>20$). The arrow marks the initiation
        of control.  Note, that the estimated (co)variance values (solid lines)
        follow the predictions from the MS solution (dashed lines), but do not
        converge to the target state. 
    }
    \label{fig:control:other:det}
\end{figure}

Finally, we demonstrate that it does not suffice to control the moments $
\boldsymbol{\mu}_K(t) $ and $ \mathbf{C}_{KK}(t) $, i.e., the mean and
covariance values among the nodes of the FVS, without controlling $
\mathbf{C}_{JK}(t) $ as well. To show this we use the simplified control signal 
\begin{equation}
    \mathbf{u}_K^*(t) = \boldsymbol{\mu}_K^*(t) + \mathbf{C}_{KK}^{*\frac{1}{2}}(t)\boldsymbol{\eta}_g(t) 
    \label{eq:sim:openloop}
\end{equation}
instead of $\mathbf{u}_K^*(t, \mathbf{x}_J(t))$ from \cref{eq:pt2:controller}.
This leads to $\mathbf{C}_{JK}(t) = 0$, since $\boldsymbol{\eta}_J(t)$ and
$\boldsymbol{\eta}_g(t)$ are independent stochastic processes.
As a consequence, the covariances $ \mathbf{C}_{JJ}(t) $ are not guaranteed to
converge to the target state (see \cref{fig:control:other:det}). 

We repeated the calculations for \cref{fig:control:fvs,fig:control:wrong} using
the same network system but with increased noise intensities ($\sigma=0.05,
0.1$). 
For $\sigma=0.05$, the controller was able to switch the network state and
hold it at the target, but the steady-state covariance values
deviated slightly from their predicted values (not shown).  This deterioration
is caused by the (generally) decreased approximation quality of the method of
moments for increased noise amplitudes. For $\sigma=0.1$ the MS becomes
unstable (i.e., it looses its dissipativity) which precludes computation of the
control signal.  Therefore, in practice, the stability of the MS may be used to
determine whether the noise is weak enough for successful application of the
control method. 

\section{\label{Sec:Results:sim2}Example 2: brain network}

To demonstrate the applicability of our theoretical results to larger systems we now consider a stochastic whole brain network model of 66 nodes which has been calibrated based on human diffusion imaging and functional magnetic resonance imaging data at rest, as described in \cite{Deco2013JN}.   
Each node represents the pooled neuronal activation of a brain region and is governed by a nonlinear mean field model that consists of a stochastic scalar differential equation with additive noise (Eqs.~(6)--(8) in \cite{Deco2013JN}). The covariance matrices of the node variables indicate functional connectivity between brain regions.
We use the same parameter values for the mean field model as in \cite{Deco2013JN}, except for the kinetic parameter $ \gamma $ which is changed to $\gamma =
0.241 \cdot 10^{-3}$ in order to meet the \cref{asm:decay}. 
The network graph is given by a thresholded asymmetric structural
connectivity matrix which was obtained using tractography based on diffusion (spectrum) imaging data and a partitioning of the brain into 66 regions, as described in \cite{Hagmann2008}. The resulting weighted graph is visualized in \cref{fig:control:brain} (a).
\cref{fig:control:brain} (b) shows the bifurcation diagram of the network states (in the absence of noise) as the global coupling parameter $ G $ varies.
For small values $ G $ only one stable state exists, characterized by a low neuronal firing activity in all brain regions. As $ G $ increases a first bifurcation emerges at a critical value, where multiple stable states of high activity appear and co-exist with the low activity state. As $ G $ further increases a second bifurcation appears where the low activity state becomes unstable.
This bifurcation diagram qualitatively resembles the one in \cite{Deco2013JN} (Fig. 2). 

\begin{figure*}[!htbp]
    \includegraphics[width=0.98\linewidth]{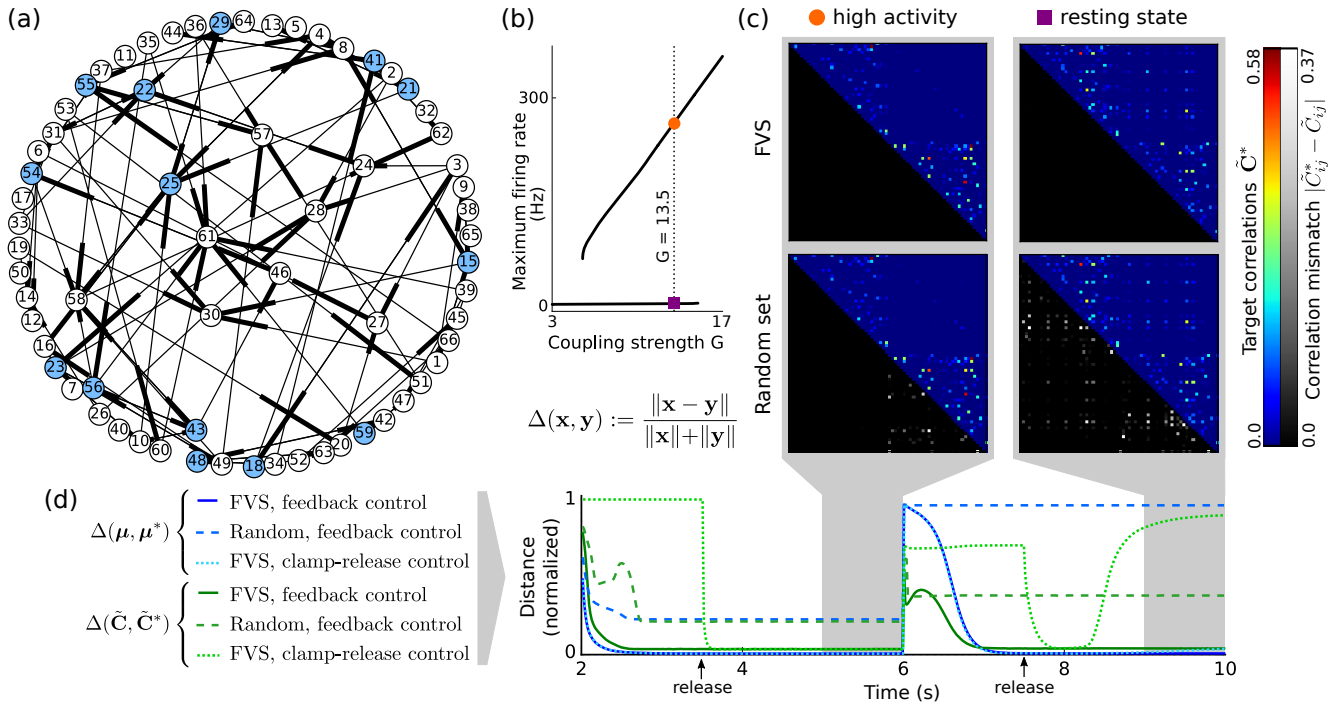}
    \caption{Brain network model. (a) Visualization of the weighted 66-node network graph. 
    	Nodes of the selected minimal FVS are highlighted in blue.
        Thick ends of the edges denote edge destination and strength. (b) Maximum
        firing rate of the network as a function of the global coupling
        strength $G$. Colored markers denote two different target states. (c)
        Correlation matrices at the target states (upper triangles) and their
        mismatch from the actual result of switching control (lower triangles).
        Rows correspond to two choices of nodes for the feedback control method: minimal FVS and a
        random set. Columns correspond to the target states indicated in (a). Correlation
        matrices are obtained by temporally averaging results of Monte-Carlo
        simulations ($ 2\cdot 10^5 $ iterations) over 1~s after the system has
        converged. (d) Normalized distance $\Delta$ between the actual and the
        target states as a function of time, in terms of means and correlation
        matrices, for the feedback controller using FVS or random nodes, and for 
        the clamp-release control scheme using FVS nodes.
        Note that diagonal elements are not considered when computing
        distance between correlation matrices (see \cref{Sec:Results:sim2}).
        The system first evolves without control for 2~s. Then, the control
        method is applied to switch to the target high activity state and 4~s
        later it is adjusted to switch to the resting state. For clamp-release control, 
        release occurs 
        1.5~s after each initiation.
    }
    \label{fig:control:brain}
\end{figure*}

We chose $G=13.5$, where the stochastic network possesses a metastable state
of low activity (resting state) and a set of metastable states associated with
high neuronal firing rates (``evoked'' high activity states). We
apply our feedback control method (\cref{eq:pt2:controller}) 
to switch the system to desired target states by
directly interfering only with a minimal FVS $K$ of $14$ nodes, see
\cref{fig:control:brain} (c). In particular, the system was initialized close to
the resting state, switched to a target high activity state after 2~s and back
to the resting state 4~s later. For comparison we used, as an alternative, a 
``clamp-release'' control strategy, where the FVS nodes
were clamped deterministically to the target $\boldsymbol{\mu}^*$ for 1.5~s and
then released, anticipating that the covariances would settle on the corresponding attractor in the moments space. 
We further used the feedback control method with three sets of randomly chosen nodes of the same size as the FVS (shown for one set).
To assess the quality of control, we used the normalized distance
$\Delta(\mathbf{x}, \mathbf{y}) := \frac{\rVert \mathbf{x} - \mathbf{y}
\lVert}{\rVert \mathbf{x} \lVert + \rVert \mathbf{y} \lVert}$ between the
target state and the actual state of the system, where $\rVert \cdot \rVert$ is
the Frobenius norm (\cref{fig:control:brain} (d)).  Note that when computing
distances between correlation matrices we do not take into account elements on
the diagonal, because $[\tilde{C}_1]_{ii} = [\tilde{C}_2]_{ii} = 1$ for any two
correlation matrices $\tilde{\mathbf{C}}_1, \tilde{\mathbf{C}}_2$.
%
Controlling the FVS nodes with the feedback controller successfully switched
the system to the desired states after a relatively short period (i.e., both
$\Delta(\boldsymbol{\mu}(t), \boldsymbol{\mu}^*)$ and
$\Delta(\tilde{\mathbf{C}}(t), \tilde{\mathbf{C}}^*)$ vanish), while
controlling a randomly chosen subset of nodes did not suffice. The
clamp-release control scheme (applied to the FVS set) succeeded to bring the system
to the high activity state with the appropriate correlation patterns, but failed to 
return it to the resting state
($\tilde{\mathbf{C}}(t)$ briefly approached $\tilde{\mathbf{C}}^*$ after the
release of control but started to diverge in less than 1~s).

\section{\label{Sec:Discussion}Discussion and Conclusion}

In this contribution we have developed a control method that is capable of
switching between metastable states -- in terms of the first and second moments
-- of a large class of stochastic nonlinear dynamical networks by pinning only
a subset of nodes. These nodes are identified based on information about the
graph of the network.
Our results can be applied in neuroscience, as demonstrated in \cref{Sec:Results:sim2}, where the second moments of
neuronal activities are often used to quantify functional connectivity. 
Because of their generality they may also be applied to
a variety of problems in engineering or physics.

A central limitation of the presented approach is the assumption of weak noise
(\cref{asm:noise}), which is required for the applied method of moments and
stability considerations in the proof of \cref{prop:sn}. Here, ``weak'' means
that the probability density $ p(x,t) $ of $x(t)$ (across noise realizations)
should be concentrated around the expected value $ \langle x\rangle(t) $. 

The assumption of weak noise implies that the dynamics of the system are
dominated by the deterministic part.  It seems, therefore, tempting to steer
the stochastic system to the desired state by pinning its set of SN
deterministically to values on the corresponding target attractor of the
noiseless system, and then ``release'' the control, to let the second moments
settle corresponding to that attractor.
%
Note, that releasing the control is required since deterministic pinning yields
inappropriate correlation patterns, as explained in \cref{Sec:Results:sim}.
Releasing the control, however, can cause the system to rapidly transition to a different state (cf.~\cref{fig:control:brain} (d)) . 
We have demonstrated that the proposed feedback controller allows to 
hold desired metastable states for long periods, while maintaining
the appropriate (desired) covariances.

Another limitation of the approach, is that the proposed feedback control
signal, cf.~\cref{eq:pt2:controller}, requires i) full observability of the
network's state to create non-zero covariance between the control signal and
the network nodes, and ii) knowledge of the instantaneous moments to compute
the proper feedback transformation matrix required to keep clamped covariances
at their target values, cf. \crefrange{eq:pt2:w}{eq:pt2:mug}.
One way to obtain this data is to solve the moments system
(\cref{eq:pt2:muJ,eq:pt2:cJ}), which requires knowledge of the system's
dynamics function $\mathbf{f}$, which might not be available.  However, for
systems with metastable fixed point states and weak $K\to J$ coupling (i.e.,
$||\nabla_K\mathbf{f}_J(t, \mathbf{x}_J, \mathbf{x}_K)||<\varepsilon$ for all
$t, \mathbf{x}$ and small positive $\varepsilon$) this problem can be
circumvented by using time invariant parameters $\boldsymbol{\mu}_g,
\mathbf{C}_g$ and $\mathbf{W}$, which are chosen such that conditions given by
\crefrange{eq:pt2:condmu}{eq:pt2:condcov} hold when
$\boldsymbol{\mu}_J(t)\equiv\boldsymbol{\mu}_J^*$ and
$\mathbf{C}_{JJ}(t)\equiv\mathbf{C}_{JJ}^*$: 
\begin{subequations}
    \begin{align}
        &\boldsymbol{\mu}_g = \boldsymbol{\mu}_K^* - \mathbf{C}_{JK}^{*T}\mathbf{C}^{*-1}_{JJ}\boldsymbol{\mu}_J^* \\
        &\mathbf{C}_g =  \mathbf{C}_{KK}^* - \mathbf{C}_{JK}^{*T}\mathbf{C}_{JJ}^{*-1}\mathbf{C}_{JK}^* \\
        &\mathbf{W} = \mathbf{C}_{JJ}^{*-1}\mathbf{C}_{JK}^* 
    \end{align}
    \label{eq:pt2:parstat}%
\end{subequations}
The weak coupling ensures that the full derivative $\nabla_J\mathbf{f}_J(t,
\mathbf{x}_J, \mathbf{u}_K(\mathbf{x}_J))$ is dominated by
$\nabla_J\mathbf{f}_J(\cdot, \mathbf{x}_J, \cdot)$ which guarantees the
existence of a global attractor in the controlled network.  In this way, the
control signal $\mathbf{u}_K(\mathbf{x}_J)$ is based only on knowledge of the
moments on the target state.

Finally, the FVS can be determined using a simple search procedure based on
\cref{def:fvs} (see Sec.~7.2 in \cite{Fiedler2013}). For very large graphs,
however, identification of the minimal FVS becomes computationally intractable,
as the corresponding optimization problem is NP-complete \cite{Karp1972}.
Therefore, an interesting problem would be to find approximate minimal feedback
vertex sets for large graphs at a reasonable computational cost.

\section{Acknowledgment}
We thank Lutz Schimansky-Geier for helpful comments on the manuscript.
This work was supported by DFG in the framework of collaborative research center SFB910.

\appendix*
\renewcommand{\theprop}{A.\arabic{prop}}
\section{\label{Sec:proofs}Proof of \cref{prop:sn}}

In order to prove \cref{prop:sn} we first provide three useful propositions.
Consider the deterministic system, \cref{eq:noiseless}, in terms of two coupled
subsystems,
\begin{subequations}
    \begin{align}
        \dot{\mathbf{x}}_K &= \mathbf{f}_K(t, \mathbf{x}_K, \mathbf{x}_J) \label{eq:a:simple1} \\
        \dot{\mathbf{x}}_J &= \mathbf{f}_J(t, \mathbf{x}_J, \mathbf{x}_K) \label{eq:a:simple2},
    \end{align}
    \label{eq:a:subsimple}%
\end{subequations}
cf. \cref{eq:pt2:xK,eq:pt2:xJ} without the noise terms. We define $\mathbf{g}^*(t,
\cdot):=\mathbf{f}_J(t, \cdot, \mathbf{x}^*_K)$ for a particular solution
$\mathbf{x}^*(t)$ of \cref{eq:noiseless}, such that
\begin{equation}
    \dot{\mathbf{x}}_J = \mathbf{g}^*(t, \mathbf{x}_J) \label{eq:a:clamped_subsystem}
\end{equation}
describes the dynamics of the state vector $\mathbf{x}_J$ where the time tracks
of the nodes $K$ are prescribed to $\mathbf{x}^*_K(t)$. 

In the following proposition we express the dynamics of the difference of any
two solutions of \cref{eq:a:clamped_subsystem} by a system of linear
differential equations with time-varying coefficients and relate the asymptotic
stability of that system to the controllability of the full system,
\cref{eq:a:simple1,eq:a:simple2}, via the set of nodes $K$.

\begin{prop}
    Let the linear system
    \begin{subequations}
    \begin{align}
        \dot{\mathbf{w}} &= \mathbf{A}(t)\mathbf{w}
	    \label{eq:a:var} \\
        \mathbf{A}(t) &:= \int_0^1 \nabla_{\mathbf{y}} \mathbf{g}^*(t, \mathbf{y}(t) + \nu[\tilde{\mathbf{y}}(t)-\mathbf{y}(t)])d\nu
	    \label{eq:a:A}
    \end{align}
    \label{eq:a:varsys}%
    \end{subequations}
    be globally asymptotically stable for any two solutions
    $\tilde{\mathbf{y}}(t)$, $\mathbf{y}(t)$ of 
    \begin{equation}
        \dot{\mathbf{y}} = \mathbf{g}^*(t, \mathbf{y})  
        \label{eq:a:simple2short}
    \end{equation}
    and any solution $\mathbf{x}^*(t)$ of \cref{eq:noiseless}. Then $K$ is a
    set of switching nodes (\cref{def:sn}) of the system \eqref{eq:noiseless}.

    \label{prop:snstab}
\end{prop}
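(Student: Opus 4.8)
The plan is to reduce the switching-node property directly to the assumed stability of \cref{eq:a:varsys}, using the elementary integral form of the mean value theorem (a Hadamard-type identity) to turn the difference of two solutions of the clamped subsystem into a trajectory of that linear time-varying system.

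First I would fix an arbitrary solution $\mathbf{x}^*(t)$ of the full system \cref{eq:noiseless}. This choice fixes $\mathbf{g}^*(t,\cdot)=\mathbf{f}_J(t,\cdot,\mathbf{x}^*_K)$ and hence the clamped subsystem \cref{eq:a:clamped_subsystem}. The crucial preliminary observation is that the $J$-block $\mathbf{x}^*_J(t)$ of the chosen full solution is itself a solution of \cref{eq:a:clamped_subsystem}: substituting $\mathbf{x}_K=\mathbf{x}^*_K$ into \cref{eq:a:simple2} gives $\dot{\mathbf{x}}^*_J=\mathbf{f}_J(t,\mathbf{x}^*_J,\mathbf{x}^*_K)=\mathbf{g}^*(t,\mathbf{x}^*_J)$. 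Thus $\mathbf{x}^*_J$ will serve as one of the two solutions entering the hypothesis.

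Next, let $\mathbf{x}_J(t)$ be an arbitrary solution of \cref{eq:a:clamped_subsystem} — precisely the subsystem that appears in \cref{def:sn} with $I\setminus K=J$ and clamp $\mathbf{x}^*_K$. Set $\mathbf{w}(t):=\mathbf{x}_J(t)-\mathbf{x}^*_J(t)$ and apply the fundamental theorem of calculus to $\nu\mapsto\mathbf{g}^*\bigl(t,\mathbf{x}^*_J(t)+\nu[\mathbf{x}_J(t)-\mathbf{x}^*_J(t)]\bigr)$ on $[0,1]$:
\[
\dot{\mathbf{w}}
=\mathbf{g}^*(t,\mathbf{x}_J)-\mathbf{g}^*(t,\mathbf{x}^*_J)
=\Bigl(\int_0^1\nabla_{\mathbf{y}}\mathbf{g}^*\bigl(t,\mathbf{x}^*_J+\nu[\mathbf{x}_J-\mathbf{x}^*_J]\bigr)\,d\nu\Bigr)\mathbf{w}
=\mathbf{A}(t)\mathbf{w},
\]
where $\mathbf{A}(t)$ is exactly \cref{eq:a:A} with $\mathbf{y}=\mathbf{x}^*_J$ and $\tilde{\mathbf{y}}=\mathbf{x}_J$. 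Since the hypothesis grants global asymptotic stability of \cref{eq:a:varsys} for this very pair of solutions (and the chosen $\mathbf{x}^*$), it follows that $\mathbf{w}(t)\to\mathbf{0}$, i.e. $\mathbf{x}_J(t)\to\mathbf{x}^*_J(t)$ as $t\to+\infty$. Because $\mathbf{x}_J$ and $\mathbf{x}^*$ were arbitrary, this is exactly the defining condition of \cref{def:sn}, so $K$ is a set of switching nodes.

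I do not expect a serious obstacle here; the only items needing (routine) care are: (i) the straight-line interpolant between $\mathbf{x}^*_J(t)$ and $\mathbf{x}_J(t)$ always lies in the domain of $\mathbf{g}^*$, which is immediate since that domain is all of $\mathbb{R}^{m_J}$; (ii) the integrand in \cref{eq:a:A} is continuous in $\nu$ by the assumed continuous differentiability of $\mathbf{f}$ (hence of $\mathbf{g}^*$) in its state argument, so $\mathbf{A}(t)$ is well defined and the FTC step is legitimate; and (iii) the limit in \cref{def:sn} presupposes that the relevant solutions extend to all $t\ge0$, which is ensured by dissipativity (\cref{asm:dissip}) together with the stability hypothesis, keeping trajectories bounded. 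If there is any subtle point, it is the self-referential matching in the displayed chain: the difference of the two solutions solves the \emph{same} linear system whose coefficient matrix is constructed from those two solutions, so no fixed-point argument is required — the hypothesis is stated for all admissible pairs precisely so that it applies to the pair one actually produces.
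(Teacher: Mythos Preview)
Your proof is correct and follows essentially the same route as the paper: both use the Newton--Leibniz/Hadamard identity to show that the difference of two solutions of the clamped subsystem satisfies the linear system \cref{eq:a:varsys}, then invoke the assumed asymptotic stability. The only cosmetic difference is that the paper first argues the clamped system has a unique globally attractive solution and then identifies it as $\mathbf{x}^*_J$, whereas you specialize immediately to the pair $(\mathbf{x}^*_J,\mathbf{x}_J)$; your additional remarks on well-definedness and global existence are more careful than the paper's presentation.
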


\begin{proof}
    Observe that
    \begin{equation}
        \frac{d}{dt}(\tilde{\mathbf{y}}-\mathbf{y}) = \mathbf{g}^*(t, \tilde{\mathbf{y}}) - \mathbf{g}^*(t, \mathbf{y}) = \mathbf{A}(t)(\tilde{\mathbf{y}}-\mathbf{y}),
        \label{eq:a:difference}
    \end{equation}
    where we have used the Newton-Leibnitz formula for the equation on the
    right-hand side.  Due to the stability assumption on \cref{eq:a:var},
    the difference between any two solutions $\mathbf{y}$, $\tilde{\mathbf{y}}$
    of \cref{eq:a:simple2short} vanishes asymptotically as $t\to\infty$.
    Consequently, \cref{eq:a:simple2short} possesses a unique globally
    attractive solution. By construction, for any solution $\mathbf{x}^*(t)$ of
    \cref{eq:noiseless}, $\mathbf{x}^*_J(t)$ is the globally attractive
    solution of \cref{eq:a:simple2short}.  We obtain: $
    \mathbf{x}_K(t)=\mathbf{x}^*_K(t) \implies \mathbf{x}_J(t)\to
    \mathbf{x}^*_J(t)$ as $ t\to \infty $.  Thus, by \cref{def:sn}, $K$ is a
    set of switching nodes.
\end{proof}

In the following, the notation $\mathbf{A} < \mathbf{B} < \mathbf{0}$ means
that the matrices $\mathbf{A}$, $\mathbf{B}$ and $\mathbf{A}-\mathbf{B}$ are
(strictly) negative definite, i.e. $\mathbf{x}^T\mathbf{A}\mathbf{x} < 0,
\forall \mathbf{x} \ne \mathbf{0}$.

\begin{prop}
    Let $\mathbf{A}(t): \mathbb{R}\to\mathbb{R}^{k\times k}, k=\sum_{i=1}^Nm_i$
    be a continuous and uniformly bounded block-triangular matrix function,
    with $N$ blocks $\mathbf{A}_i(t)$, $ i \in \lbrace 1, \dots, N \rbrace $ on
    the main diagonal, and let each block $\mathbf{A}_i(t)$ be negative
    definite and bounded from above, i.e., $\mathbf{A}_i(t) <
    \bar{\mathbf{A}}_i < \mathbf{0}$ for all $t\ge 0$. Then, for any continuous
    and uniformly bounded matrix function $\mathbf{B}(t)$, there exists an
    $\varepsilon>0$, such that the (trivial) solution $\mathbf{z}(t)\equiv
    \mathbf{0}$ of the system 
    \begin{equation}
        \dot{\mathbf{z}} = [\mathbf{A}(t) + \varepsilon \mathbf{B}(t)]\mathbf{z}
	    \label{eq:a:perturbed}
    \end{equation}
    is globally asymptotically stable.

    \label{prop:perturb}
\end{prop}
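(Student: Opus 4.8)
The plan is to exhibit an explicit weighted–quadratic Lyapunov function adapted to the triangular (cascade) structure, prove exponential decay of $\|\mathbf{z}\|$ for the unperturbed system $\dot{\mathbf z}=\mathbf A(t)\mathbf z$ with a quantitative margin, and then absorb $\varepsilon\mathbf B(t)$ into that margin for $\varepsilon$ small.

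First I would normalise: by relabelling the blocks, assume $\mathbf A(t)$ is block-\emph{lower}-triangular, with off-diagonal blocks $\mathbf A_{ij}(t)$ ($j<i$) bounded by $\|\mathbf A_{ij}(t)\|\le a$ and $\|\mathbf B(t)\|\le b$ uniformly in $t$ (the boundedness hypotheses). Because $\mathbf A_i(t)<\bar{\mathbf A}_i<\mathbf 0$ uniformly in $t$, there are constants $c_i>0$ with $\mathbf z^{T}\mathbf A_i(t)\mathbf z\le -c_i\|\mathbf z\|^{2}$ for all $t\ge 0$ (for non-symmetric blocks this refers to symmetric parts). Then I take $V(\mathbf z):=\sum_{i=1}^{N}d_i\|\mathbf z_i\|^{2}$ with weights $d_1,\dots,d_N>0$ still to be chosen; $V$ is equivalent to $\|\mathbf z\|^{2}$ for any positive weights.

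The heart of the argument is the computation of $\dot V$ along $\dot{\mathbf z}=\mathbf A(t)\mathbf z$. Lower-triangularity means row $i$ couples only to $\mathbf z_j$ with $j<i$; using $\mathbf z^{T}\mathbf A_i(t)\mathbf z\le -c_i\|\mathbf z\|^{2}$ and Young's inequality $2a\|\mathbf z_i\|\|\mathbf z_j\|\le a\delta\|\mathbf z_i\|^{2}+a\delta^{-1}\|\mathbf z_j\|^{2}$, the coefficient of $\|\mathbf z_i\|^{2}$ in $\dot V$ equals $-2c_id_i+a\delta(i-1)d_i+a\delta^{-1}\sum_{k>i}d_k$. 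I would first fix $\delta>0$ so small that $a\delta(i-1)<c_i$ for every $i$, and then choose the weights \emph{recursively from $i=N$ down to $i=1$} (set $d_N=1$; at stage $i$ make $d_i$ large enough relative to the already-fixed $d_{i+1},\dots,d_N$) so that every coefficient is $\le-\rho_i<0$. This yields $\dot V\le-\sum_i\rho_i\|\mathbf z_i\|^{2}\le-2\gamma V$ for some $\gamma>0$ depending only on $\mathbf A$. Adding back $\varepsilon\mathbf B$ contributes $2\varepsilon\sum_id_i\,\mathbf z_i^{T}[\mathbf B(t)\mathbf z]_i\le 2\varepsilon\,b(\max_id_i)\|\mathbf z\|^{2}\le 2\varepsilon\kappa V$ with $\kappa:=b(\max_id_i)/(\min_id_i)$, so along \eqref{eq:a:perturbed} one gets $\dot V\le-2(\gamma-\varepsilon\kappa)V$. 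Hence any $\varepsilon\in(0,\gamma/\kappa)$ forces $V(t)\le V(0)e^{-2(\gamma-\varepsilon\kappa)t}$, i.e. $\mathbf z(t)\to\mathbf 0$ exponentially from every initial state, which is the claimed global asymptotic stability.

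I expect the delicate point to be the recursive weight choice: making precise that the cascade structure lets the decay rate $c_i$ of block $i$ dominate the coupling fed into it from the downstream blocks $k>i$, whose weights have been made comparatively small. Everything else is routine — extracting uniform-in-$t$ constants from $\mathbf A_i(t)<\bar{\mathbf A}_i<\mathbf 0$, and the Young-inequality bookkeeping — but it is easy to botch signs or the order of quantifiers ($\delta$ before the $d_i$, and the $d_i$ before $\varepsilon$). An essentially equivalent alternative is induction on the number of blocks $N$ (peel off the last, autonomous block of the triangular system, apply the inductive hypothesis to the rest, and close the loop with a Bellman–Gronwall estimate on the coupling term), but the single Lyapunov function above is the cleanest to write out.
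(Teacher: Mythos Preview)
Your argument is correct and self-contained: the weighted quadratic Lyapunov function $V=\sum_i d_i\|\mathbf z_i\|^2$, with $\delta$ fixed first and the $d_i$ chosen recursively from the bottom block upward, does yield $\dot V\le -2\gamma V$ for the unperturbed cascade, and the perturbation contributes at most $2\varepsilon\kappa V$, so any $\varepsilon<\gamma/\kappa$ works. The only thing to be careful about is exactly what you flagged: the order of quantifiers and the extraction of uniform constants $c_i$ from $\mathbf A_i(t)<\bar{\mathbf A}_i<\mathbf 0$ (which is immediate since $\bar{\mathbf A}_i$ is a fixed negative-definite matrix).

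The paper takes a genuinely different route. Instead of a Lyapunov function, it performs an orthogonal change of variables $\mathbf z=\mathbf Q(t)\mathbf y$ coming from the QL-factorisation $\mathbf Z(t)=\mathbf Q(t)\mathbf L(t)$ of the fundamental matrix; block-triangularity of $\mathbf A(t)$ forces $\mathbf Q(t)$ to be block-diagonal, so the transformed system $\dot{\mathbf y}=\tilde{\mathbf A}(t)\mathbf y$ is \emph{fully} (not just block) lower-triangular, and negative-definiteness of each $\mathbf A_i(t)$ makes every scalar diagonal entry of $\tilde{\mathbf A}(t)$ uniformly negative (the skew-symmetric term $\mathbf Q^T\dot{\mathbf Q}$ contributes nothing on the diagonal). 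The paper then invokes a Perron--Liao type perturbation theorem (Theorem~1.1 of Dai, \emph{J.\ Differ.\ Equat.}\ 2006) bounding the top Lyapunov exponent of the perturbed triangular system, and transfers the conclusion back via invariance of characteristic exponents under orthogonal (Lyapunov) transformations. Your approach is more elementary and avoids external references, giving an explicit stability margin $\gamma-\varepsilon\kappa$ in one stroke; the paper's approach, by reducing to a scalar-triangular system and appealing to Lyapunov-exponent machinery, is perhaps more in the spirit of nonautonomous stability theory and makes clearer why uniform negativity of the diagonal \emph{blocks} (rather than of the whole matrix) is the right hypothesis.
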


\begin{proof}
    We first transform the unperturbed system
    \begin{equation}
        \dot{\mathbf{z}} = \mathbf{A}(t)\mathbf{z},
        \label{eq:a:unperturbed}
    \end{equation}
    to an equivalent system $ \dot{\mathbf{y}} =
    \tilde{\mathbf{A}}(t)\mathbf{y} $ with triangular matrix $
    \tilde{\mathbf{A}}(t) $.  The fundamental solution matrix $\mathbf{Z}(t)$
    of \cref{eq:a:unperturbed} can be uniquely \emph{QL-factorized},
    $\mathbf{Z}(t) = \mathbf{Q}(t)\mathbf{L}(t)$, where $\mathbf{Q}(t)$ is
    orthonormal and $\mathbf{L}(t)$ is a lower triangular matrix. Moreover,
    because $\mathbf{Z}(t)$ is lower block-triangular, the matrix
    $\mathbf{Q}(t)$ is block-diagonal with blocks $\mathbf{Z}_i(t) =
    \mathbf{Q}_i(t)\mathbf{L}_i(t)$. Using the change of basis
    $\mathbf{z}=\mathbf{Q}(t)\mathbf{y}$ we obtain (time dependence of
    $\mathbf{A}$ and $\mathbf{Q}$ is implied):
    \begin{equation}
        \dot{\mathbf{y}} = (\mathbf{Q}^T\mathbf{A}\mathbf{Q} - \mathbf{Q}^T\dot{\mathbf{Q}})\mathbf{y} =: \tilde{\mathbf{A}}(t)\mathbf{y},
        \label{eq:a:orthobasis}
    \end{equation}
    where $\tilde{\mathbf{A}}(t)$ is continuous, uniformly bounded and
    lower-triangular.  Note that since the diagonal blocks $\mathbf{A}_i(t)$ of
    $\mathbf{A}(t)$ are negative definite for all $t$, and $\mathbf{Q}(t)$ is
    block-diagonal with blocks $\mathbf{Q}_i(t)$, all diagonal entries of
    $\mathbf{Q}^T\mathbf{A}\mathbf{Q}$ are negative and bounded from above,
    i.e., for each diagonal block
    $\mathbf{D}_i:=\mathbf{Q}_i^T\mathbf{A}_i\mathbf{Q}_i$, we have $[D_i]_{jj}
    = \mathbf{q}_j^T(t)\mathbf{A}_i(t)\mathbf{q}_j(t) < \bar{a} < 0$, $\forall
    t$, where $i\in \lbrace 1,\dots, N\rbrace$, $j\in \lbrace 1,\dots,
    m_i\rbrace$ and $\mathbf{q}_j(t)$ is the $j$-th column of
    $\mathbf{Q}_i(t)$.  Moreover, the matrix $\mathbf{Q}^T\dot{\mathbf{Q}}$ is
    skew-symmetric and hence has zeros on the main diagonal (since
    $\mathbf{Q}^T\mathbf{Q} = \mathbf{I}$ and thus
    $d(\mathbf{Q}^T\mathbf{Q})/dt =
    \dot{\mathbf{Q}}^T\mathbf{Q}+\mathbf{Q}^T\dot{\mathbf{Q}} = \mathbf{0} $).
    It follows that the diagonal entries $\tilde{\mathbf{A}}(t)$ are
    continuous, negative and bounded from above.
    
    Using the same change of basis $\mathbf{z}=\mathbf{Q}(t)\mathbf{y}$ for
    \cref{eq:a:perturbed}, we obtain
    \begin{equation}
        \dot{\mathbf{y}} = [\tilde{\mathbf{A}}(t) + \varepsilon \tilde{\mathbf{B}}(t)]\mathbf{y},
        \label{eq:a:orthoperturbed}
    \end{equation}
    with $ \tilde{\mathbf{B}} := \mathbf{Q}^T\mathbf{B}\mathbf{Q}$.  According
    to Theorem 1.1 in \cite{Dai2006}, if the condition
    \begin{equation}
        \chi:=\limsup_{s\to+\infty}\frac{1}{T_s}\sum_{s'=0}^{s-1}\max_{1\le k\le N}\left\{ \int_{T_{s'}}^{T_{s'+1}}\tilde{A}_{kk}(t)dt \right\} < 0
        \label{eq:a:chi}
    \end{equation}
    is fulfilled, then for any given $\delta>0$ there exists an upper bound
    $\gamma$ on the norm of the perturbation $\varepsilon
    \tilde{\mathbf{B}}(t)$, i.e., $\varepsilon || \tilde{\mathbf{B}}(t)||<
    \gamma$ such that the \emph{maximum Lyapunov exponent} for the trivial
    solution of the system \eqref{eq:a:orthoperturbed} is bounded from above by
    $\chi + \delta$. Given that the diagonal elements of
    $\tilde{\mathbf{A}}(t)$ are negative and bounded from above (for all $ t
    $), the condition \eqref{eq:a:chi} holds for any partitioning
    $\{T_s\}_0^{\infty}$, and $\chi\le \bar{a}$.  Therefore, there exists a
    sufficiently small $\varepsilon > 0$ such that the system
    \cref{eq:a:orthoperturbed} is asymptotically and exponentially stable.

    Finally, observe that by Theorem 7.11 from \cite{meiss2007}, the
    characteristic exponents of the systems \eqref{eq:a:perturbed} and
    \eqref{eq:a:orthoperturbed} are identical.
\end{proof}

Next, we provide a generalization of the \emph{Kronecker sum} operation $
\oplus $ for partitioned matrices. 

\begin{prop}
    Let $\mathbf{A}, \mathbf{B} \in \mathbb{C}^{k\times k}, k=\sum_{i=1}^Nm_i$
    be two partitioned lower block-triangular matrices with $N$ blocks on the
    main diagonal.  Each diagonal block $\mathbf{A}_i$ and $\mathbf{B}_i$ is a
    square $m_i\times m_i$ matrix. All the elements above the diagonal blocks
    are zero. Then there exist a block diagonal \emph{permutation matrix}
    $\mathbf{P} \in \{0,1\}^{k^2\times k^2}$, such that the matrix
    \begin{equation}
        \mathbf{S} := \mathbf{P}(\mathbf{A}\oplus \mathbf{B})\mathbf{P}
        \label{eq:a:ksum}
    \end{equation}
    is again block-triangular, with $N^2$ blocks on the main diagonal, given by
    \begin{equation}
        \mathbf{S}_{i,j} := \mathbf{S}_{i(N-1)+j} = \mathbf{B}_j \oplus \mathbf{A}_i,
        \label{eq:a:ksum:block}
    \end{equation}
    where  $i,j\in \lbrace 1,\dots, N\rbrace$. 
    \label{prop:ksum}
\end{prop}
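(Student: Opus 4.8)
The plan is to build $\mathbf{P}$ explicitly as a direct sum of perfect‑shuffle (commutation) matrices and to extract the block structure of $\mathbf{S}$ by applying twice the elementary fact that tensoring a (block‑)triangular matrix on one side with an identity preserves (block‑)triangularity. Fix the convention $\mathbf{X}\oplus\mathbf{Y}=\mathbf{X}\otimes\mathbf{I}+\mathbf{I}\otimes\mathbf{Y}$ (the argument for the opposite convention merely interchanges the roles of $\mathbf{A}$ and $\mathbf{B}$ below). The first step is to view $\mathbb{C}^{k^2}=\mathbb{C}^k\otimes\mathbb{C}^k$ with the coarse partition into $N$ blocks of sizes $m_1k,\dots,m_Nk$ coming from the block structure of the \emph{first} tensor factor. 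With respect to this partition, $\mathbf{A}\otimes\mathbf{I}_k$ is lower block‑triangular with diagonal blocks $\mathbf{A}_i\otimes\mathbf{I}_k$ (its $(i,i')$ super‑block is $\mathbf{A}_{ii'}\otimes\mathbf{I}_k$, which vanishes for $i<i'$), while $\mathbf{I}_k\otimes\mathbf{B}$ is block‑diagonal with diagonal blocks $\mathbf{I}_{m_i}\otimes\mathbf{B}$. Hence $\mathbf{A}\oplus\mathbf{B}$ is already lower block‑triangular with $N$ diagonal blocks $\mathbf{A}_i\otimes\mathbf{I}_k+\mathbf{I}_{m_i}\otimes\mathbf{B}$, $i=1,\dots,N$.

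The second step refines each of these $N$ diagonal blocks. Inside the $i$-th one, which lives on $\mathbb{C}^{m_i}\otimes\mathbb{C}^k$, conjugate by the perfect‑shuffle matrix $\mathbf{K}_{m_i,k}\in\{0,1\}^{m_ik\times m_ik}$, the permutation for which $\mathbf{K}_{m_i,k}(\mathbf{X}\otimes\mathbf{Y})\mathbf{K}_{m_i,k}^{-1}=\mathbf{Y}\otimes\mathbf{X}$. This turns $\mathbf{A}_i\otimes\mathbf{I}_k+\mathbf{I}_{m_i}\otimes\mathbf{B}$ into $\mathbf{I}_k\otimes\mathbf{A}_i+\mathbf{B}\otimes\mathbf{I}_{m_i}$, which now carries the block structure of $\mathbf{B}$ on the first factor: partitioning $\mathbb{C}^k\otimes\mathbb{C}^{m_i}$ into $N$ blocks of sizes $m_1m_i,\dots,m_Nm_i$, the term $\mathbf{B}\otimes\mathbf{I}_{m_i}$ is lower block‑triangular with diagonal blocks $\mathbf{B}_j\otimes\mathbf{I}_{m_i}$, and $\mathbf{I}_k\otimes\mathbf{A}_i=\bigoplus_j(\mathbf{I}_{m_j}\otimes\mathbf{A}_i)$ is block‑diagonal, so the sum is lower block‑triangular with diagonal blocks $\mathbf{B}_j\otimes\mathbf{I}_{m_i}+\mathbf{I}_{m_j}\otimes\mathbf{A}_i=\mathbf{B}_j\oplus\mathbf{A}_i$, $j=1,\dots,N$. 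Setting $\mathbf{P}:=\mathbf{K}_{m_1,k}\oplus\cdots\oplus\mathbf{K}_{m_N,k}$ yields a block‑diagonal permutation matrix in $\{0,1\}^{k^2\times k^2}$; since conjugating the coarsely lower‑block‑triangular $\mathbf{A}\oplus\mathbf{B}$ by this block‑diagonal $\mathbf{P}$ leaves the coarse lower‑block‑triangular structure intact while refining each coarse diagonal block as above, $\mathbf{S}$ is lower block‑triangular with $N^2$ diagonal blocks $\mathbf{S}_{i,j}=\mathbf{B}_j\oplus\mathbf{A}_i$, and the single index in $\mathbf{S}_{i(N-1)+j}$ merely records the position of block $(i,j)$ under the lexicographic ordering ``$i$ first, $j$ second''.

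I do not expect a genuine obstacle here: the statement is two applications of ``tensoring with an identity preserves triangularity'' glued together by a commutation matrix, and the only thing demanding attention is the bookkeeping — keeping the order of the two tensor factors straight (this is exactly what the shuffle $\mathbf{K}_{m_i,k}$ fixes, and it is what produces $\mathbf{B}_j\oplus\mathbf{A}_i$ rather than $\mathbf{A}_i\oplus\mathbf{B}_j$), choosing the lexicographic ordering of the $N^2$ groups so that both inherited triangularities point the same way, and verifying that $\mathbf{P}$ acts within the coarse blocks of size $m_ik$ so that it is honestly block‑diagonal. A secondary care point is pinning down the $\oplus$-convention so that the diagonal blocks come out in the stated order. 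For the sequel one should also note that whenever the $\mathbf{A}_i$ and $\mathbf{B}_j$ are all negative definite (as guaranteed by the decay condition, \cref{asm:decay}) each diagonal block $\mathbf{B}_j\oplus\mathbf{A}_i$ is again negative definite, which is precisely the hypothesis needed to invoke \cref{prop:perturb}.
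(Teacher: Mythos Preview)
Your proposal is correct and follows essentially the same route as the paper: observe that $\mathbf{A}\oplus\mathbf{B}$ is already lower block-triangular with $N$ diagonal blocks $\mathbf{A}_i\otimes\mathbf{I}_k+\mathbf{I}_{m_i}\otimes\mathbf{B}$, then conjugate each diagonal block by a commutation (perfect-shuffle) matrix to swap the tensor factors and expose the nested block-triangular structure with diagonal blocks $\mathbf{B}_j\oplus\mathbf{A}_i$. Your write-up is in fact slightly more explicit than the paper's in justifying why the block-diagonal $\mathbf{P}$ preserves the coarse triangularity, and your closing remark on negative definiteness correctly anticipates the use made of this proposition in the proof of \cref{prop:sn}.
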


\begin{proof}
    By definition, the matrix $\mathbf{A}\oplus \mathbf{B}$ is given by
    \begin{equation}
        \mathbf{S}' := \mathbf{A} \oplus \mathbf{B} = \mathbf{A}\otimes \mathbf{I} + \mathbf{I} \otimes \mathbf{B},
        \label{eq:a:ksum:def}
    \end{equation}
    where $ \otimes $ is the Kronecker product and $\mathbf{I}$ is the $k\times
    k$ identity matrix.  $\mathbf{S}'$ has $N$ blocks $ \mathbf{S}'_i $, $ i
    \in \lbrace 1, \dots, N \rbrace $, of dimension $m_ik \times m_ik$  on the
    main diagonal, $ \mathbf{S}'_i = \mathbf{A}_i\otimes \mathbf{I} +
    \mathbf{I}_i \otimes \mathbf{B} $, where $\mathbf{I}_i$ is the $m_i \times
    m_i$ identity matrix.
    We next design a block-wise permutation matrix $\mathbf{P}$ such that each
    block $\mathbf{S}'_i$ itself has a block structure.  To that end, we define
    a block-diagonal matrix $\mathbf{P}$ with the same block partitioning of
    the main diagonal as the matrix $\mathbf{S}'$. Let each block
    $\mathbf{P}_i$ of $\mathbf{P}$ be a \emph{commutation matrix}, such that $
    \mathbf{P}_i(\mathbf{D}\otimes \mathbf{E})\mathbf{P}_i = \mathbf{E}\otimes
    \mathbf{D} $ for any two matrices $\mathbf{D}$ and $\mathbf{E}$ of
    dimensions $m_i\times m_i$ and $k \times k$, respectively.
    Then the blocks $ \mathbf{S}_i $ on the main diagonal of
    $\mathbf{S}:=\mathbf{P}\mathbf{S}'\mathbf{P}$ are given by
    \begin{equation}
        \mathbf{S}_i = \mathbf{P}_i\mathbf{S}'_i\mathbf{P}_i = \mathbf{I}\otimes \mathbf{A}_i + \mathbf{B} \otimes \mathbf{I}_i.
        \label{eq:a:ksum:blockswap}
    \end{equation}
    Note that each block $\mathbf{S}_i$ is block-triangular, and the $j$-th
    nested block $ \mathbf{S}_{i,j} $ on the main diagonal of $\mathbf{S}_i$
    reads
    \begin{equation}
        \mathbf{S}_{i,j} = \mathbf{I}_j \otimes \mathbf{A}_i + \mathbf{B}_j \otimes \mathbf{I}_i = \mathbf{B}_j \oplus \mathbf{A}_i,
        \label{eq:a:ksum:level2}
    \end{equation}
    for $i,j \in \lbrace 1,\dots,N \rbrace$. Hence,
    $\mathbf{P}(\mathbf{A}\oplus \mathbf{B})\mathbf{P}$ is a block-triangular
    matrix with $N^2$ blocks as given by
    \cref{eq:a:ksum:level2}.
\end{proof}

\begin{proof}[Proof of \cref{prop:sn}]
    Consider the network system, \cref{eq:pt1:stochastic}, as represented by
    the two coupled subsystems, \cref{eq:pt2:xK,eq:pt2:xJ}.  Let the set $J$ be
    topologically sorted according to the directed acyclic subgraph of the
    nodes in the set $I\setminus K$, such that node $J[1]$ has incoming edges
    only from the nodes in $K$, $J[2]$ -- only from the nodes in $K\cup J[1]$,
    etc.  With this ordering of the nodes represented by the state vector $
    \mathbf{x}_J $ (and the components $\mathbf{f}_J$, $ \mathbf{M}_J $, $
    \boldsymbol{\eta}_J $ in \cref{eq:pt2:xJ} ordered correspondingly) the
    Jacobian matrix $\nabla_{J}\mathbf{f}_J(t, \mathbf{x}_J, \mathbf{x}_K)$ is
    a lower block-triangular matrix, and each block on the main diagonal is
    strictly negative definite for all $t$ and bounded $\mathbf{x}$ due to the
    decay condition (\cref{asm:decay}).  Note, that for multivariate nodes ($
    m_i>1 $), the ordering affects $ \mathbf{x}_J $ ``block-wise":
    $\mathbf{x}_J = [\mathbf{x}_{J[1]}^T,
    \mathbf{x}_{J[2]}^T,\dots,\mathbf{x}_{J[N-|K|]}^T]^T$.

    Let $\boldsymbol{\mu}_J(t) := \langle \mathbf{x}_J\rangle (t)$ and
    $\mathbf{C}_{JK}(t) := cov(\mathbf{x}_J,\mathbf{x}_K)(t) = \langle
    \mathbf{x}_J\mathbf{x}_K^T \rangle(t) -
    \boldsymbol{\mu}_J(t)\boldsymbol{\mu}_K^T(t)$ as used in
    \cref{Sec:Controller}. That is, the $m_i\times m_j$ submatrix
    $\mathbf{C}_{ij}(t)$ of $\mathbf{C}_{JJ}(t)$ is the covariance matrix
    between $\mathbf{x}_{J[i]}(t)$ and $\mathbf{x}_{J[j]}(t)$. $K^* := \lbrace
    \boldsymbol{\mu}_K, \mathbf{C}_{KK}, \mathbf{C}_{JK}, \mathbf{C}_{KJ}
    \rbrace$ then denotes the set of switching node candidates, cf.
    \cref{eq:pt1:snmc}.
    Using \cref{prop:snstab} we can now prove \cref{prop:sn} by showing the
    stability of the solution of the subsystem, \cref{eq:pt2:muJ,eq:pt2:cJ}, of the MS,
    \cref{eq:pt1:mu,eq:pt1:c}, where the moments $\boldsymbol{\mu}_K(t) =
    \boldsymbol{\mu}_K^*(t) $, $ \mathbf{C}_{KK}(t) = \mathbf{C}_{KK}^*(t) $, $
    \mathbf{C}_{JK}(t) = \mathbf{C}_{KJ}^T(t) = \mathbf{C}_{JK}^*(t) $ are
    pinned.
	
    It is convenient to express \cref{eq:pt2:muJ,eq:pt2:cJ} in compact form as
	\begin{equation}
        \dot{\boldsymbol{\gamma}}_J = \mathbf{h}_J(t, \boldsymbol{\gamma}_J, \boldsymbol{\gamma}_K), \label{eq:a:mucJcomp}
	\end{equation}
    where $\boldsymbol{\gamma}_J =
    [\mathbf{C}_{JJ[1]}^T,\dots,\mathbf{C}_{JJ[N-|K|]}^T,
    \boldsymbol{\mu}_J^T]^T \in \mathbb{R}^{k_J^2+k_J}$, $k_J := \sum_{i \in
    J}m_i$ contains $ \boldsymbol{\mu}_J $ and the columns of $ \mathbf{C}_{JJ}
    $.  Considering the Jacobian matrix $\mathbf{U} =
    \mathbf{U}(t,\boldsymbol{\gamma}_J, \boldsymbol{\gamma}_K)
    :=\nabla_{J}\mathbf{h}_J(t, \boldsymbol{\gamma}_J, \boldsymbol{\gamma}_K)$
    the following block structure can be identified:
    \begin{equation}
        \begin{split}
            \mathbf{U} &= \left[\begin{BMAT}{c.c}{c.c}
                    \nabla_{J}\mathbf{f}_J\oplus\nabla_{J}\mathbf{f}_J & \mathbf{B}_{1} \\
                    * & \nabla_{J}\mathbf{f}_J + \mathbf{B}_{2} 
            \end{BMAT}\right] \\
            &= \left[\begin{BMAT}{c.c}{c.c}
                    \mathbf{A}\oplus \mathbf{A} & 0 \\
                    * & \mathbf{A}
            \end{BMAT}\right] + 
            \left[\begin{BMAT}{c.c}{c.c}
                    0 & \mathbf{B}_1 \\
                    0 & \mathbf{B}_2
            \end{BMAT}\right] =: \mathbf{U}_L + \mathbf{B},
        \end{split}
        \label{eq:a:main}
    \end{equation}
    where $\mathbf{A} = \mathbf{A}(t,\boldsymbol{\gamma}_J,
    \boldsymbol{\gamma}_K):=\nabla_{J}\mathbf{f}_J(t,
    \boldsymbol{\mu}_J,\boldsymbol{\mu}_K)$ and $\mathbf{B}=
    \mathbf{B}(t,\boldsymbol{\gamma}_J, \boldsymbol{\gamma}_K) = \mathbf{B}(t,
    \boldsymbol{\mu}, \mathbf{C})$ is a bounded matrix function that is linear
    with respect to $\mathbf{C}(t)$ and involves second and third derivatives of
    $\mathbf{f}_J$.
    Note, that $\mathbf{U}_L = \mathbf{U}_L(t,\boldsymbol{\gamma}_J,
    \boldsymbol{\gamma}_K)$ is a lower block-triangular matrix.
    The $k_J\times k_J$ lower-right block of $\mathbf{U}_L$, i.e., $ \mathbf{A}
    $, is a lower block-triangular matrix with negative definite blocks
    $\mathbf{A}_i:=\nabla_{{J[i]}}\mathbf{f}_{J[i]}$ on the main diagonal, as
    was noted above.  Considering the $k_J^2\times k_J^2$ upper-left block of
    $\mathbf{U}_L$ we know, by \cref{prop:ksum}, that there exists a
    permutation matrix $\mathbf{P}$ such that the matrix
    $\mathbf{P}(\mathbf{A}\oplus \mathbf{A})\mathbf{P}$ is lower
    block-triangular with $N^2$ blocks, each given by $\mathbf{A}_j \oplus
    \mathbf{A}_i$. Furthermore, each of these blocks is negative definite,
    which can be seen by inspecting the hermitian parts $\mathbf{H}_{i,j} :=
    (\mathbf{A}_j \oplus \mathbf{A}_i) + (\mathbf{A}_j \oplus \mathbf{A}_i)^T$,
    since a matrix is negative definite iff its hermitian part is negative
    definite:
    \begin{equation}
        \begin{split}
            \mathbf{H}_{i,j} &= (\mathbf{A}_j \oplus \mathbf{A}_i) + (\mathbf{A}_j \oplus \mathbf{A}_i)^T \\
            &= \mathbf{A}_j\otimes \mathbf{I}_i + \mathbf{A}_j^T \otimes \mathbf{I}_i + \mathbf{I}_j \otimes \mathbf{A}_i + \mathbf{I}_j \otimes \mathbf{A}_i^T \\
            &= (\mathbf{A}_j+\mathbf{A}_j^T)\otimes \mathbf{I}_i + \mathbf{I}_j \otimes (\mathbf{A}_i + \mathbf{A}_i^T).
        \end{split}
        \label{eq:a:blocknd}
    \end{equation}
    So, $\mathbf{H}_{i,j}$ is a sum of two real symmetric matrices, both of
    which have only negative eigenvalues, hence $\mathbf{H}_{i,j}$ and
    consequently $\mathbf{A}_j \oplus \mathbf{A}_i$ are negative definite. As a
    result, using a suitable permutation matrix $\mathbf{V}$, the Jacobian
    $\mathbf{U}_L'(t,\boldsymbol{\gamma}_J,
    \boldsymbol{\gamma}_K):=\mathbf{V}\mathbf{U}_L(t,\boldsymbol{\gamma}_J,
    \boldsymbol{\gamma}_K)\mathbf{V}$ is a lower block-triangular matrix with
    $N+N^2$ negative definite blocks for all $t$ and bounded
    $\boldsymbol{\gamma}_J$, $\boldsymbol{\gamma}_K$. 
    Moreover, the element-wise integrated matrix
    \begin{equation}
        \hat{\mathbf{U}}_L'(t) := \int_0^1 \mathbf{U}_L'(t, \boldsymbol{\xi}(t, \nu),\boldsymbol{\gamma}_K(t))d\nu
        \label{eq:a:Uhat}
    \end{equation}
    preserves the aforementioned properties of $\mathbf{U}_L'$ for any bounded
    $\boldsymbol{\xi}(t, \nu) \in \mathbb{R}^{k_J + k_J^2}$ and
    $\boldsymbol{\gamma}_K(t)$.
    
    By \cref{prop:snstab}, we have that $K^*$ is indeed a set of SNs of
    \cref{eq:pt1:mu,eq:pt1:c}, if the matrix $\hat{\mathbf{U}}'(t) := \int
    \mathbf{V}\mathbf{U}\mathbf{V} =: \hat{\mathbf{U}}_L'(t) +
    \hat{\mathbf{B}}'(t)$ yields a globally stable linear nonautonomous system
    \begin{equation}
        \dot{\mathbf{w}} = \hat{\mathbf{U}}'(t)\mathbf{w}.
        \label{eq:a:varpert}
    \end{equation}
    
    By the assumption of weak noise (\cref{asm:noise})
    $\sup_{t\in\mathbb{R}^+}||\mathbf{C}(t)|| < \varsigma$ with small
    $\varsigma>0$.  Because the perturbation matrix $\mathbf{B}$ is linear in
    $\mathbf{C}(t)$, the norm of $\hat{\mathbf{B}}'(t)$ multiplicatively
    depends on $\varsigma$, and therefore can be arbitrary small, assuming
    bounded second and third derivatives of $\mathbf{f}_J$.  Hence, by
    \cref{prop:perturb}, there exists a sufficiently small $\varsigma$, such
    that the solution of \cref{eq:a:varpert} is globally asymptotically stable.
\end{proof}




%

\end{document}